\newcommand{\pki}[1]{#1} % 
\newcommand{\Petros}[1]{#1}
\definecolor{darkgreen}{rgb}{0,0.5,0}
\newcommand{\er}{Erd\H{o}s-R\'{e}nyi\xspace}
\newcommand{\scfr}{Scale-Free\xspace}
\newcommand{\pr}{\mathbf{PR}}
\newcommand{\transit}{\mathbf{T}}
\newcommand{\States}{\mathcal{S}}
\newcommand{\RewardStates}{\ensuremath{\mathcal{S}_{\mathcal{R}}}\xspace}
\newcommand{\RewardStatesA}{\ensuremath{\mathcal{A}}\xspace}
\newcommand{\RewardStatesB}{\ensuremath{\mathcal{B}}\xspace}
\newcommand{\RewardStatesX}{\ensuremath{\mathcal{X}}\xspace}
\newcommand{\RewardStatesAcapB}{\ensuremath{\mathcal{A} \cap \mathcal{B}}\xspace}
\newcommand{\RewardStatesAcupB}{\ensuremath{\mathcal{A} \cup \mathcal{B}}\xspace}
\newcommand{\RewardStatesone}{\ensuremath{\mathcal{S}_{R_{1}}}\xspace}
\newcommand{\RewardStatestwo}{\ensuremath{\mathcal{S}_{R_{2}}}\xspace}
\newcommand{\RewardStatesPolicy}{\ensuremath{\mathcal{S}^*_\policy}\xspace}
\newcommand{\RewardStatesPolicyi}{\ensuremath{\mathcal{S}^*_{\policy_i}}\xspace}
\newcommand{\state}{s}
\newcommand{\Transitions}{\mathcal{T}}
\newcommand{\RewardVector}{\mathcal{R}}
\newcommand{\Initial}{{\mathcal{I}}}
\newcommand{\Moves}{{\mathcal{M}}}
\newcommand{\currentreward}{r}
\newcommand{\policy}{\pi}
\newcommand{\Policies}{\Pi}
\newcommand{\ExpReward}{{F}}
\newtheorem{theorem}{Theorem}
\newtheorem{corollary}[theorem]{Corollary}
\newtheorem{lemma}[theorem]{Lemma}
\DeclareMathOperator*{\argmax}{arg\,max}
\newcommand{\maxmin}{$\max$--$\min$\xspace}
\newcommand{\minmax}{$\min$--$\max$\xspace}
\newcommand{\zok}{\textsc{0--1 Knapsack}\xspace}
\newcommand{\mnk}{\textsc{Max--Min 0--1 Knapsack}\xspace}
\newcommand{\hitset}{\textsc{Hitting Set}\xspace}
\newcommand{\mstree}{\textsc{Minimum Steiner Tree}\xspace}
\newcommand{\knapsack}{\textsc{Knapsack}\xspace}
\newcommand{\problemnorobust}{Reward Placement\xspace}
\newcommand{\problemnorobustshort}{RP\xspace}
\newcommand{\problem}{Robust Reward Placement\xspace}
\newcommand{\problemshort}{\textsc{RRP}\xspace}
\newcommand{\setting}{setting\xspace}
\newcommand{\settings}{settings\xspace}
\newcommand{\modelname}{Markov Mobility Model\xspace}
\newcommand{\modelnames}{Markov Mobility Models\xspace}
\newcommand{\modelshorts}{MMMs\xspace}
\newcommand{\modelshort}{MMM\xspace}
\newcommand*{\NPhard}{\ensuremath{\mathbf{NP}}-hard\xspace}
\title{Robust Reward Placement under Uncertainty}
\author{
Petros Petsinis$^1$\and
Kaichen Zhang$^2$\and
Andreas Pavlogiannis$^1$\and
Jingbo Zhou$^3$\And\\
Panagiotis Karras$^{4,1}$
\affiliations
$^1$Department of Computer Science, Aarhus University\\
$^2$Artificial Intelligence Thrust, Hong Kong University of Science and Technology (Guangzhou)\\
$^3$Business Intelligence Lab, Baidu Research\\
$^4$Department of Computer Science, University of Copenhagen
\emails
petsinis@cs.au.dk\and
kzhangbi@connect.ust.hk\and
pavlogiannis@cs.au.dk\and
zhoujingbo@baidu.com\and\\
piekarras@gmail.com
}
\begin{document}
\maketitle
%Abstract
%-------------------------------------------------------------
\begin{abstract}
We consider a problem of placing generators of rewards to be collected by randomly moving agents in a network. In many settings, the precise mobility pattern may be one of several possible, based on parameters outside our control, such as weather conditions. The placement should be \emph{robust} to this uncertainty, to gain a competent total reward across possible networks. To study such scenarios, we introduce the \emph{Robust Reward Placement} problem (RRP). Agents move randomly by a Markovian Mobility Model with a predetermined set of locations whose connectivity is chosen adversarially from a known set~$\Pi$ of candidates. We aim to select a set of reward states within a budget that maximizes the minimum ratio, among all candidates in~$\Pi$, of the collected total reward over the optimal collectable reward under the same candidate. We prove that RRP is NP-hard and inapproximable, and develop~$\Psi$-Saturate, a pseudo-polynomial time algorithm that achieves an~$\epsilon$-additive approximation by exceeding the budget constraint by a factor that scales as~$\mathcal{O}(\ln |\Pi|/\epsilon)$. In addition, we present several heuristics, most prominently one inspired by a dynamic programming algorithm for the \maxmin \zok problem. We corroborate our theoretical analysis with an experimental evaluation on synthetic and real data.
\end{abstract}
%-------------------------------------------------------------

%Introduction
%-------------------------------------------------------------
\section{Introduction}

In many graph optimization problems, a stakeholder has to select locations in a network, such as a road, transportation, infrastructure, communication, or web network, where to place reward-generating facilities such as stores, ads, sensors, or utilities to best service a population of moving agents such as customers, autonomous vehicles, or bots~\cite{zhang2016submodular,ostachowicz2019optimization,zhang2020geodemographic,rosenfeld2016optimal,amelkin2019fighting}. 
Such problems are intricate due to the uncertainty surrounding agent mobility~\cite{krause2008robust,chen2016robust,rimkempe,horvcik2022optimal}.

\begin{figure}[!t]
\hspace*{0mm}
\includegraphics[width=0.45\textwidth]{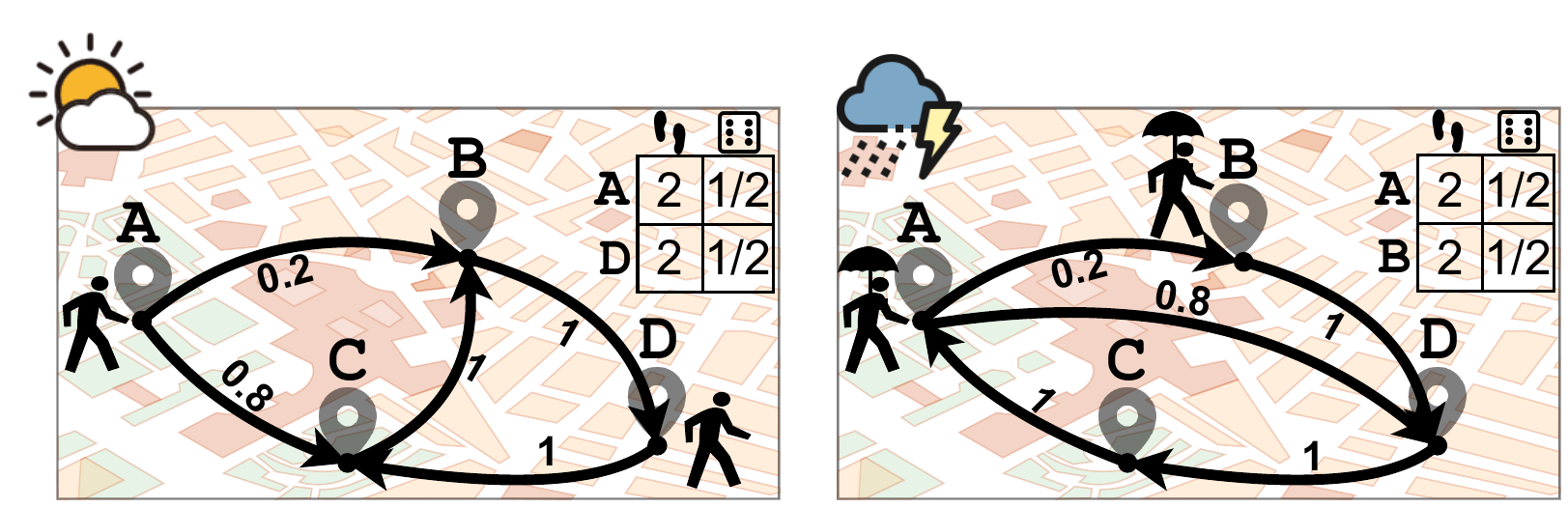}
\caption{Moving agent under two settings; \emph{sunny} and \emph{rainy}; tables show numbers of steps and initial probabilities.}\label{fig:ad}
\end{figure}

For instance, consider outdoor ad placement. We represent the road map as a probabilistic network in which agents move. If every agent follows the same movement pattern regardless of environmental conditions, then the problem of placing ads to maximize the expected number of ad views admits a greedy algorithm with an approximation ratio~\cite{zhang2020geodemographic}. Still, the problem becomes more \pki{involved} under malleable environmental conditions that alter movement patterns. As a toy example, Figure~\ref{fig:ad} shows a probabilistic network. An agent randomly starts from an initial location and takes two steps by the probabilities shown on edges representing street segments, under two environmental settings, \emph{sunny} and \emph{rainy}. Assume a stakeholder has a budget to place an ad-billboard at a \emph{single location}. Under the \emph{sunny} setting, the best choice of placement is~$B$, as the agent certainly passes by that point regardless of its starting position; under the \emph{rainy} setting, the agent necessarily passes by~$D$ within two steps, hence that is most preferable. However, under the \emph{rainy} setting~$B$ yields expected reward~$0.6$, and so does~$D$ under the \emph{sunny} one. Due to such uncertainty, a risk-averse stakeholder would prefer the location that yields, in the worst case, the highest ratio of the collected to best feasible reward, i.e., in this case, $C$, which yields expected reward~$0.9$ under both settings.

In this paper, we introduce the problem of robust reward placement (\problemshort) in a network, under uncertainty about the environment whereby an agent is moving according to any of several probabilistic mobility settings. We express each such setting by a \modelname (\modelshort)~$\policy~\in~\Policies$. The cumulative reward a stakeholder receives grows whenever the agent passes by one of the \emph{reward states}~$\RewardStates$. \problemshort seeks to select a set of such states~$\RewardStates^*$ within a budget, that maximizes the worst-case ratio, across all settings~$\Policies$, of the collected reward~$\ExpReward(\RewardStates|\policy)$ over the highest reward that can be collected under the same setting $\ExpReward(\RewardStatesPolicy|\policy)$, i.e., $\RewardStates^*~=~\arg\max_{\RewardStates} \min_{\policy \in \Policies} \frac{\ExpReward(\RewardStates|\policy)}{\ExpReward(\RewardStatesPolicy|\policy)}$. This max-min ratio objective is used in risk-averse portfolio optimization and advertising~\cite{ordentlich1998cost,li2020optimal}.

\paragraph{Our Contribution.} 
Our contributions stand as follows:
\begin{enumerate}[noitemsep,topsep=0pt,parsep=0pt,partopsep=0pt]
\item 
We introduce the problem of~\problem~(\problemshort) over a set of~\modelnames, that has real-world applications across various domains.

\item We study the properties of \problemshort and show that it is \NPhard (Theorem~\ref{th:rp_np_hard_Pi_1}). Due to the additivity and monotonicity properties of the reward function (Lemma~\ref{lem:monot_addit}), it admits an optimal solution in pseudo-polynomial time under a single \setting, i.e. $|\Policies| = 1$ (Lemma~\ref{lem:rp_optimal_linear}), yet it is inapproximable when $|\Policies|>1$ unless we exceed the budget constraint by a factor $\mathcal{O}(\ln |\Pi|)$ (Theorem~\ref{th:inapprox}). 

\item We adopt techniques from robust influence maximization to develop~$\Psi$-Saturate, a pseudo-polynomial time algorithm that finds a solution within~$\epsilon$ distance of the optimal, i.e. $\text{OPT}-\epsilon$, while exceeding the budget constraint by an~$\mathcal{O}(\ln\sfrac{|\Policies|}{\epsilon})$ factor (Lemma~\ref{lem:bicriteria_apx}).

\item We present several heuristics as alternative solutions, most prominently one based on a dynamic programming algorithm for the \maxmin \zok problem, to which RRP can be reduced (Lemma~\ref{lem:rp_and_minmax}).
\end{enumerate}

We corroborate our analysis with an experimental evaluation on synthetic and real data. Due to space constraints, we relegate some proofs to the Appendix~\ref{sec:app}.
%-------------------------------------------------------------

%Related Work
%-------------------------------------------------------------
\section{Related Work}

The \problem{} problem relates to robust maximization of spread in a network, with some distinctive characteristics. Some works~\cite{du2013scalable,rimkempe,chen2016robust,logins20,logins22} study problems of selecting a seed set of nodes that robustly maximize the expected spread of a diffusion process over a network. However, in those models~\cite{kempe2003maximizing} the diffusion process is \emph{generative}, whereby an item propagates in the network by producing unlimited replicas of itself. On the other hand, we study a \emph{non-generative} spread function, whereby the goal is to reach as many as possible out of a population of network-resident agents. Our spread function is similar to the one studied in the problem of Geodemographic Influence Maximization~\cite{zhang2020geodemographic}, yet thereby the goal is to select a set of network locations that achieves high spread over a mobile population under a single environmental setting. We study the more challenging problem of achieving competitive spread in the worst case under uncertainty regarding the environment.

Several robust discrete optimization problems~\cite{kouvelis2013robust} address uncertainty in decision-making by optimizing a~\maxmin or~\minmax function under constraints.
The robust \mstree problem~\cite{johnson2000prize} seeks to minimize the worst-case cost of a tree that spans a graph; the~\minmax and~\minmax regret versions of the \knapsack problem~\cite{aissi2009min} have a modular function as a budget constraint; other works examine the robust version of submodular functions~\cite{submodular_survey12,rimkempe} that describe several diffusion processes \cite{adiga2014sensitivity,krause2008robust}. 
To our knowledge, no prior work considers the objective of maximizing the worst-case ratio of an additive function over its optimal value subject to a knapsack budget constraint.
%-------------------------------------------------------------

%Preliminaries
%-------------------------------------------------------------
\section{Preliminaries} 

\paragraph{\modelname~(\modelshort).} 
We denote a discrete-time \modelshort{} as~$\policy = (\States, \Initial, \Transitions, \Moves)$, where~$\States$ is a set of~$n$ states, $\Initial$ is a vector of~$n$ elements in~$[0,1]$ expressing an initial probability distribution over states in~$\States$, $\Transitions$ is an $n \times n$ right-stochastic matrix, where~$\Transitions[\state, \state']$ is the probability of transition from state~$\state \in \States$ to another state~$\state' \in \States$, and~$\Moves$ is an~$n \times K$ matrix with elements in~$[0,1]$, where~$K$ is the maximum number of steps and~$\Moves[\state, k]$ expresses the cumulative probability that an agent starting from state~$\state\in\States$ takes~$k' \in [k, K]$ steps.
Remarkably, an \modelshort describes multiple agents and movements, whose starting positions are expressed via initial distribution $\Initial$ and their step-sizes via $\Moves$.

\paragraph{Rewards.} 
Given an \modelshort, we select a set of states to be \emph{reward states}. 
We use \emph{a reward vector}~$\RewardVector\in\{0, 1\}^n$ to indicate whether state~$\state \in \States$ is a \emph{reward state} and denote the \emph{set} of reward states as~$\RewardStates = \{ \state \in \States | \RewardVector[s] = 1 \}$. In each timestamp~$t$, an agent at state~$\state$ 
%may move 
moves
to state~$\state'$ and 
%retrieve 
retrieves
reward~$\RewardVector[\state']$.
For a set of reward states~\RewardStates,  
%with reward vector~$\RewardVector$, 
and a given \modelshort~$\policy$, the \emph{cumulative reward}~$\ExpReward(\RewardStates|\policy)$ of an agent equals:
\begin{align}
\ExpReward(\RewardStates|\policy) &= \sum_{k\in[K]} \ExpReward_{\policy}(\RewardStates|k)\label{eq:expreturn}\\
\ExpReward_{\policy}(\RewardStates|k) &= \pki{\RewardVector^{\top}} \left( \Transitions^k (\Initial \circ \Moves_k) \right),\label{eq:expreturn_detail}
\end{align}
where $\ExpReward_{\policy}(\RewardStates|k)$ is the expected reward at the $k^{\text{th}}$ step, $\Moves_k$ is the~$k^{\text{th}}$ column of~$\Moves$, and $\circ$ denotes the Hadamard product. 
%Note that as $K\rightarrow \infty$, Equation~\ref{eq:expreturn_detail} yields the steady-state distribution of the model.
%Equation~\ref{eq:expreturn_detail} is a general formulation of PageRank scores \cite{brin1998anatomy} as it considers different initial and step distributions via $\Initial$ and $\Moves$, respectively.

\paragraph{Connection to Pagerank.} The Pagerank algorithm~\cite{brin1998anatomy}, widely used in recommendation systems, computes the stationary probability distribution of a random walker 
%navigating 
in a network. 
%While Pagerank scores can be computed in closed form by eigenvector analysis, that is prohibitive in terms of time efficiency. 
%The power-iteration method~\cite{mises1929praktische} accelerates the computation. 
The Pagerank scores are efficiently computed via power-iteration method~\cite{mises1929praktische}. 
Let~$\pr$ be an~$N \times 1$ column-vector of the Pagerank probability scores, initialized as~$\pr(0)$, 
%and computed across iterations, 
$\transit$ is an~$N \times N$ matrix featuring the transition probabilities 
%that the walker moves among nodes. 
of walker,  
and $\mathbf{1}$ be the all-ones vector.
For a damping factor~$a$, the power method computes the scores in iterations as:
\begin{align}
\label{eq:pr}
\pr(t) = a \cdot \transit\cdot\pr(t-1) + \frac{1-a}{N} \mathbf{1}.
\end{align}
%We use a damping factor~$a$, while~$\mathbf{1}$ is the all-ones vector. 
We repeat this process until convergence, i.e., until $|\pr(t) - \pr(t-1)| \leq \epsilon$ for a small $\epsilon \geq 0$. We denote the PageRank score at the $i^{th}$ node as $\pr[i]$. For a sufficiently large number of steps $K$ for each state with $\Moves_k \!=\! \mathbf{1}\, \forall k \!\in\! [K]$, Equation~\eqref{eq:expreturn_detail} becomes $\ExpReward_{\policy}(k) = \RewardVector^{\top} \left( \Transitions^k \Initial \right)$. Likewise, for damping factor $a=1$, Equation~\eqref{eq:pr} becomes $\pr(t) =  \transit^t\pr(0)$, thus the two equations are rendered analogous with $\transit = \Transitions$ and~$\pr(0) = \Initial$. Then, considering that the iteration converges from step $\hat{k}$ onward, the expected reward from reward state~$\state_i$ per step $k \geq \hat{k}$, $\ExpReward_{\policy}(\{\state_i\}|k)$, is the PageRank score of the $i^\mathrm{th}$ node, that is~$\pr[i]$. To see this, let~$\RewardVector_i = \mathbf{1}_i$ be the reward vector when $\state_i \in \States$ is the \emph{only} reward state; then it holds that $\pr[i] = \mathbf{1}_i^{\top} \left(\transit^k \pr(0) \right) = \RewardVector_i^{\top} \left( \Transitions^k \Initial \right) = \ExpReward_{\policy}(\{\state_i\}|k)$.
%-------------------------------------------------------------

%Problem Formulation
%-------------------------------------------------------------
\section{Problem Formulation} 
In this section we model the uncertain environment where individuals navigate and introduce the \problem~(\problemshort) problem over a set of \modelnames (\modelshorts), extracted from real movement data, that express the behavior of individuals under different \settings.

\paragraph{Setting.}
Many applications generate data on the point-to-point movements of agents over a network, along with a distribution and their total number of steps. Using aggregate statistics on this information, we formulate, \emph{without loss of generality}, the movement of a population by a single agent moving probabilistically over the states of an \modelshort $\policy = (\States, \Initial, \Transitions, \Moves)$. Due to environment uncertainty, the agent may follow any of $|\Policies|$ different settings\footnote{We use the terms `setting' and `model' interchangeably.} $\Policies = \{\policy_1, \policy_2, \ldots, \policy_{|\Policies|}\}$.

\paragraph{\problem~Problem.} 
Several resource allocation problems can be formulated as optimization problems over an \modelshort~$\policy$, where reward states $\RewardStates$ correspond to the placement of resources. Given a budget $L$ and a cost function $c:\States\rightarrow\mathbb{N}^{+}$, the \problemnorobust~(\problemnorobustshort) problem seeks a set of reward states~$\RewardStates^* \subseteq \States$ that maximizes the cumulative reward~$\ExpReward(\RewardStates^*|\policy)$ obtained by an agent, that is: 
%$\RewardStates^* = \arg\max_{\RewardStates} \ExpReward(\RewardStates|\policy)$ s.t. $\sum_{\state \in \RewardStates} c[\state]$. 
\begin{align*}\label{eq:Ad-Placement}
\RewardStates^* = \arg\max_{\RewardStates} \ExpReward(\RewardStates|\policy)
\text{\quad s.t. } \sum_{\state \in \RewardStates} c[\state] \leq L.
\end{align*}
However, in reality the \emph{agent's} movements follow an unknown distribution sampled from a set of settings ${\Policies}~=~\{\policy_1, \policy_2, \ldots, \policy_{|\Policies|}\}$ represented as different \modelshorts.
Under this uncertainty, 
%given a set of \modelshorts~noted as~$\Policies$, 
the \problem~(\problemshort) problem seeks a set of reward states~\pki{\RewardStates}, within a budget, that maximizes the worst-case ratio of agent's cumulative reward over the optimal one, when the model $\policy\in\Policies$ is \emph{unknown}. 
%In particular, given a budget~$L$ and a cost function~$c: \States \rightarrow \mathbb{N}^{+}$, we seek a reward placement~\pki{$\RewardStates^* \subseteq \States$} such that:
Formally, we seek a reward placement~\pki{$\RewardStates^* \subseteq \States$} such that:
\begin{linenomath*}
\begin{equation}\label{eq:RRP}
\RewardStates^* = \arg\max_{\RewardStates} \min_{\policy \in \Policies} \frac{\ExpReward(\RewardStates|\policy)}{\ExpReward(\RewardStatesPolicy|\policy)} \text{\quad s.t. } \sum_{\state \in \RewardStates} c[\state] \leq L,
\end{equation}
\end{linenomath*}
where~$\RewardStatesPolicy = \arg\max\limits_{\RewardStates}{\ExpReward(\RewardStates|\policy)}$ is the optimal reward placement for a given model~$\policy \in \Policies$ within budget~$L$. 
This formulation is equivalent to minimizing the maximum \emph{regret ratio} of $\ExpReward(\RewardStates|\policy)$, i.e., $1 - \frac{\ExpReward(\RewardStates|\policy)}{\ExpReward(\RewardStatesPolicy|\policy)}$. 
The motivation arises from the fact that stakeholders are prone to compare what they achieve with what they could optimally achieve. The solution may also be seen as the optimal placement when the model~$\policy \in \Policies$ in which agents are moving is chosen by an \emph{omniscient} adversary, i.e. an adversary who chooses the setting~$\policy$ after observing the set of reward states~$\RewardStates$.
%-------------------------------------------------------------

%Hardness and Inapproximability Results
%-------------------------------------------------------------
\section{Hardness and Inapproximability Results}

In this section we examine the optimization problem of \problemshort~and we show that is \NPhard in general. First, in Theorem~\ref{th:rp_np_hard_Pi_1} we prove that even for a single model $(|\Policies|=1)$ the optimal solution cannot be found in polynomial time, due to a reduction from the \zok problem~\cite{karp1972reducibility}. 

\begin{theorem}\label{th:rp_np_hard_Pi_1}
The \problemshort~problem is \NPhard even for a single model, that is $|\Policies|=1$.
\end{theorem}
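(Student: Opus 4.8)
The plan is to reduce from the decision version of the \zok problem, which is \NPcomplete~\cite{karp1972reducibility}. The first step is to observe that when $|\Policies|=1$, say $\Policies=\{\policy\}$, the denominator $\ExpReward(\RewardStatesPolicy|\policy)$ in~\eqref{eq:RRP} is a constant that does not depend on the chosen set $\RewardStates$, so the \problemshort objective degenerates to $\arg\max_{\RewardStates}\ExpReward(\RewardStates|\policy)$ subject to $\sum_{\state\in\RewardStates}c[\state]\le L$; that is, \problemshort with a single model coincides with the \problemnorobust problem. It therefore suffices to prove that \problemnorobustshort is \NPhard.

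Given a \zok instance with items $1,\dots,n$ of positive integer values $v_1,\dots,v_n$, positive integer weights $w_1,\dots,w_n$, and capacity $W$, I would construct an \modelshort $\policy=(\States,\Initial,\Transitions,\Moves)$ as follows: let $\States=\{\state_1,\dots,\state_n\}$ with one state per item; set $K=1$; let $\Transitions$ be the $n\times n$ identity matrix and $\Moves$ the all-ones $n\times 1$ vector, so that every agent takes exactly one (self-)step; let $\Initial[\state_i]=v_i/V$ with $V=\sum_j v_j$, which is a valid probability distribution; and set $c[\state_i]=w_i$ and $L=W$. By~\eqref{eq:expreturn}--\eqref{eq:expreturn_detail}, an agent starting at $\state_i$ remains at $\state_i$ and collects $\RewardVector[\state_i]$, so by the additivity of the reward function (Lemma~\ref{lem:monot_addit}) we have $\ExpReward(\RewardStates|\policy)=\sum_{\state_i\in\RewardStates}\Initial[\state_i]=\tfrac{1}{V}\sum_{i:\state_i\in\RewardStates}v_i$ for every $\RewardStates\subseteq\States$.

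It then follows that a feasible set $\RewardStates$ (i.e., one with $\sum_{\state_i\in\RewardStates}c[\state_i]\le L$) achieves $\ExpReward(\RewardStates|\policy)\ge t/V$ if and only if the corresponding item subset has total weight at most $W$ and total value at least $t$. Hence the \zok instance is a yes-instance iff the constructed \problemnorobustshort instance --- equivalently, the \problemshort instance with $|\Policies|=1$ --- admits a feasible placement of cumulative reward at least $t/V$, and the reduction is clearly polynomial. The construction is routine; the only points requiring care are verifying that the max--min ratio objective genuinely collapses to plain reward maximization when $|\Policies|=1$, and that the emitted \modelshort is a legal instance (right-stochastic $\Transitions$, $\Initial$ a distribution, positive integer costs), both of which are immediate from the construction above. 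Consistently with Lemma~\ref{lem:rp_optimal_linear}, this reduction only certifies weak \NPhard ness, inherited from the weakly-hard \zok problem.
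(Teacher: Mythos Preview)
Your proof is correct and follows essentially the same approach as the paper: reduce \zok to single-model \problemshort by creating one self-looping state per item and matching costs to weights. The only cosmetic difference is that you encode item values in the initial distribution~$\Initial$ with $K=1$, whereas the paper uses a uniform~$\Initial$ and encodes the (integer) values via the step matrix~$\Moves$; both yield $\ExpReward(\RewardStates|\policy)$ proportional to the packed value, and your explicit remark that the \maxmin ratio collapses to plain reward maximization when $|\Policies|=1$ is a nice clarification the paper leaves implicit.
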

\begin{proof}
In the \zok problem~\cite{karp1972reducibility} we are given a set of items~$U$, each item~$u \in U$ having a cost~$c(u)$ and, wlog, an \emph{integer} value~$F(u)$ and seek a subset~$V \subseteq U$ 
that has total cost~$\sum_{v \in V}c(v)$ no more than a given budget~$L$ and maximum total value~$\sum_{v \in V}F(v)$. 
In order to reduce \zok to \problemshort, we set a distinct state $\state~\in~\States$ for each item~$u \in U$ with the same cost, i.e., $\States = U$, assign to each state a self-loop with transition probability~$1$, let each state be a reward state, and set a uniform initial distribution of agents over states equal to~$\sfrac{1}{\left|\States\right|}$ and steps probability equal to $\Moves[\state, k] = 1,\, \forall k \in [1,\ldots,F(u)]$. 
For a single setting, an optimal solution to the \problemshort~problem of Equation~\eqref{eq:RRP} is also optimal for the \NPhard \zok problem.\qedhere
\end{proof}

Theorem~\ref{th:inapprox} proves that \problemshort is inapproximable in polynomial time within constant factor, by a reduction from the \hitset problem, unless we exceed the budget constraint.

\begin{figure}[!b]
\centering
\includegraphics[width=0.37\textwidth]{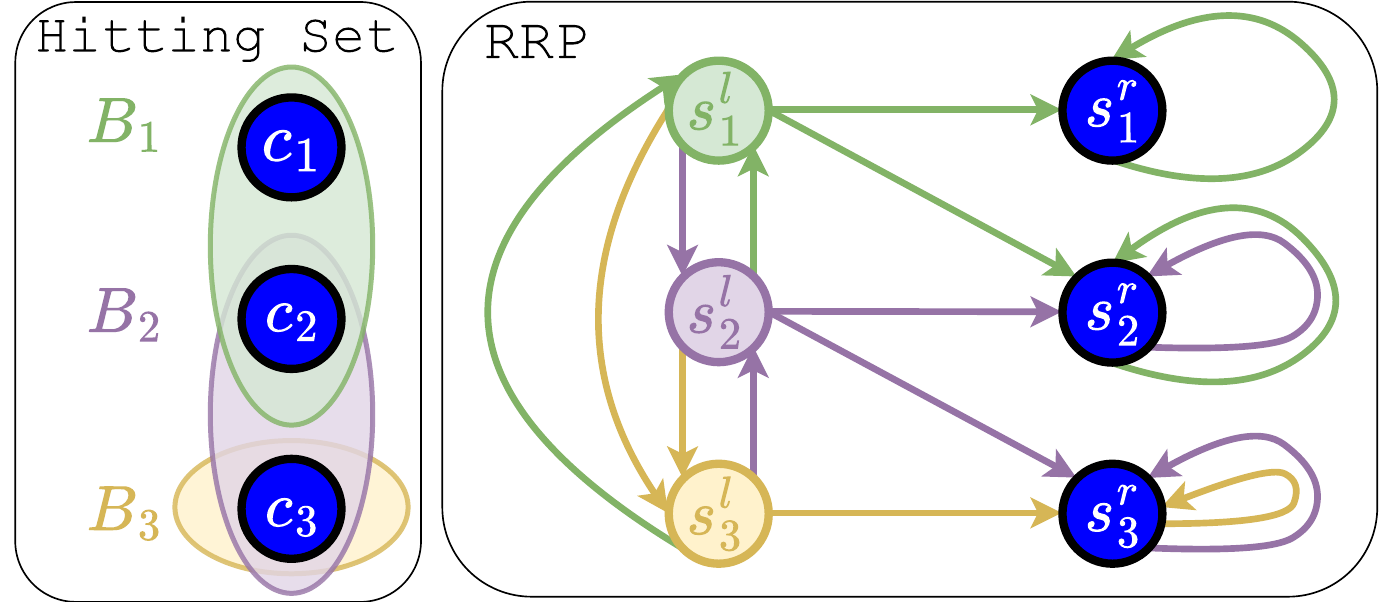}
\caption{\hitset (left) and \problemshort~reduction (right).
\label{fig:hitting_set}}
\end{figure}

\begin{theorem}\label{th:inapprox}
Given a budget~$L$ and set of models~$\Policies$, it is \NPhard to approximate the optimal solution to \problemshort within a factor of $\Omega(\sfrac{1}{n^{1 - \epsilon}})$, for any constant $\epsilon > 0$, unless the cost of the solution is at least~$\beta L$, with $\beta\geq \ln|\Policies|$.
\end{theorem}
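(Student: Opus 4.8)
The plan is to reduce from \hitset: we are given a ground set $\mathcal{U} = \{e_1,\dots,e_m\}$ and a family of subsets $T_1,\dots,T_q \subseteq \mathcal{U}$, and we must decide whether there is a hitting set of size $\le \kappa$, i.e., a set $H \subseteq \mathcal{U}$ with $|H| \le \kappa$ and $H \cap T_j \neq \emptyset$ for all $j$. The high-level idea, borrowed from bicriteria hardness results in robust influence maximization, is to build one \modelshort{} $\policy_j$ per set $T_j$, so that the adversary's choice of a setting in $\Policies$ corresponds to the adversary picking an as-yet-unhit set. Each ground element $e_i$ becomes a candidate reward state with unit cost, the budget is set to $L = \kappa$, and within setting $\policy_j$ the reward structure is chosen so that $\ExpReward(\RewardStates \mid \policy_j)$ is bounded away from zero exactly when $\RewardStates$ contains some element of $T_j$, while $\ExpReward(\RewardStatesPolicyi \mid \policy_j)$ is a fixed normalizing constant. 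Concretely, in $\policy_j$ one places a small reward gadget reachable only through the states corresponding to elements of $T_j$ (e.g., a hub state that each such element transitions into, carrying the only reward), so that hitting $T_j$ with even one selected state yields reward ratio $1$ against $\policy_j$, whereas missing $T_j$ entirely yields ratio $0$ (or, after adding a tiny background reward, ratio $n^{-\Theta(1)}$, which is what powers the $\Omega(1/n^{1-\epsilon})$ bound). Then: a hitting set of size $\le \kappa$ exists iff there is a budget-$\kappa$ placement with worst-case ratio $1$; if no hitting set of that size exists, then for \emph{every} budget-$\kappa$ placement some setting is left unhit and the objective collapses to the vanishing background value.

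The second half of the statement — that even exceeding the budget by a factor $\beta < \ln |\Policies|$ does not help — is exactly the shape of the classical set-cover/\hitset{} threshold: if an algorithm could always return a placement of cost $< (\ln q)\cdot \kappa$ achieving a positive (non-vanishing) worst-case ratio whenever a size-$\kappa$ hitting set exists, then by picking out the selected states we would obtain a hitting set of size $< (\ln q)\cdot \kappa$, beating the $\ln q$-inapproximability of \hitset{} (Feige / Dinur–Steurer), since $q = |\Policies|$ in the reduction. So I would phrase the argument as a single reduction that simultaneously rules out (i) any $\Omega(1/n^{1-\epsilon})$-approximation \emph{within} budget and (ii) any constant-ratio guarantee while staying below the $\ln|\Policies|\cdot L$ cost threshold, tracking both the value gap and the cost blowup through the same instance.

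The key steps, in order: (1) fix the \hitset{} instance and set $n = m + O(q)$ states, unit costs, $L = \kappa$; (2) define $\policy_j$ precisely — initial distribution, the hub/reward gadget for $T_j$, step matrix $\Moves$ — and verify via Equations~\eqref{eq:expreturn}–\eqref{eq:expreturn_detail} that $\ExpReward(\RewardStates\mid\policy_j)$ takes the intended two values depending on whether $\RewardStates\cap T_j=\emptyset$; (3) compute $\ExpReward(\RewardStatesPolicyi\mid\policy_j)$, which should be a constant independent of the instance size so that the ratio is controlled; (4) argue the YES case (a hitting set gives a placement with ratio $1$, cost $\le \kappa$); (5) argue the NO case both for the value gap (any cost-$\le\kappa$ placement has ratio $O(n^{-(1-\epsilon)})$) and for the cost blowup (any placement with ratio bounded below by a constant must have cost $\ge \beta\kappa$ with $\beta \ge \ln q$), invoking the \hitset{} hardness bound; (6) conclude that a polynomial-time algorithm violating the theorem would solve \hitset{} below its approximation threshold.

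I expect the main obstacle to be step (2)–(3): engineering the \modelshort{} gadget so that selecting \emph{any} single element of $T_j$ already saturates the reward under $\policy_j$ (so the ratio is genuinely $0$/$1$-valued rather than degrading with $|T_j|$), while keeping the normalizer $\ExpReward(\RewardStatesPolicyi\mid\policy_j)$ a clean constant and ensuring the background reward that produces the $n^{-(1-\epsilon)}$ term is small enough not to leak across settings. Getting the self-loop/step-count bookkeeping right — so that the cumulative-probability matrix $\Moves$ makes the agent actually reach the reward hub within $K$ steps under $\policy_j$ but contributes nothing spurious under $\policy_{j'}$ — is the fiddly part; the rest is the standard set-cover threshold argument transplanted onto this objective.
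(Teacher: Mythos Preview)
Your plan is essentially the paper's proof: reduce from \hitset, create one \modelshort{} per subset so that a placement achieves positive worst-case ratio iff it is a hitting set, and then invoke the $\ln|\Policies|$ threshold (Dinur--Steurer) for the bicriteria part. Two remarks on where you are over-engineering. First, you do not need the ratio to be $0/1$-valued: since the cumulative reward is \emph{additive} in $\RewardStates$ (Lemma~\ref{lem:monot_addit}), a single selected element cannot ``saturate'' a setting, and trying to force this is exactly the obstacle you anticipate in steps~(2)--(3). The paper sidesteps it entirely: in model $\policy_j$ the agent's mass is funneled uniformly onto the element-states of $T_j$ (made absorbing), subset-states are priced above $L$, and the resulting gap is $0$ versus $\ge 1/|\States^r|$, which already gives the $\Omega(1/n^{1-\epsilon})$ factor with no background reward needed. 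Second, the phrase ``a hub state \ldots\ carrying the only reward'' is off-model: in \problemshort{} rewards are not attached by the \modelshort{} but are precisely the states \emph{you} select, so the gadget must shape where the agent \emph{visits}, not where ``reward lives.'' Drop the saturation goal and the hub, and your steps~(4)--(6) go through verbatim.
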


\begin{proof}
We reduce the \pki{\hitset problem}~\cite{karp1972reducibility} to \problemshort and show that an approximation algorithm for \problemshort~implies one for \hitset. 
In the \hitset problem, given a collection of~$X$ items, $C = \{c_1, c_2, \ldots, c_X\}$ and a set of~$M$ subsets thereof, $B_i \subseteq C$, $i \in \{1, \ldots, M\}$, we seek a \emph{hitting set}~$C' \subseteq C$ such that~$B_i \cap C' \neq \emptyset \,\, \forall i \in \{1, \ldots, M\}$.

Given an instance of \hitset, we reduce it to \problemshort as follows. For each subset~$B_i$ we set a state $\state^l_i\in\States^l$ and for each item $c_i$ we set a state $\state^r_i\in\States^r$. Aslo, for each subset~$B_i$ we set an \modelshort $\policy_i$ $(|\Policies| = M)$ over the same set of states~$\States = \States^l\cup\States^r$ with $\States^l\cap\States^r=\emptyset$. We set the initial probabilities~$\Initial$ as uniform for all states in~$\States^l$, equal to~$\sfrac{1}{|\States^l|}$ for all models. Each model~$\policy_i \in \Policies$ features transition probabilities~$1$ from each state~$\state^l_j$ to state~$\state^l_i$, with $i \neq j$, and uniform transition probabilities from~$\state^l_i$ to each state~$\state^r_j$ if and only if~$c_j \in B_i$. 
States in~$\States^r$ are \emph{absorbing}, i.e., each state has a self-loop with probability $1$. Figure~\ref{fig:hitting_set} shows a small example of a \hitset instance and its \problemshort equivalent. We set the cost for absorbing states in $\States^r$ to $1$ and let each node in $\States^l$ have a cost exceeding~$L$. By this construction, if the reward placement~\RewardStates does not form a \emph{hitting set}, then there exists at least one subset~$B_i$, such that~$B_i \cap \RewardStates = \emptyset$, hence~$\min_{\policy}\frac{\ExpReward(\RewardStates|\policy)}{\ExpReward(\RewardStatesPolicy|\policy)} = 0$. In reverse, if~\RewardStates forms a hitting set, it holds that~$\min_{\policy} \frac{\ExpReward(\RewardStates|\policy)}{\ExpReward (\RewardStatesPolicy|\policy)} \geq \frac{1}{|\States^r|} > 0$. 
Thus, a hitting set exists if and only if~$\min_{\policy} \frac{\ExpReward(\RewardStates|\policy)}{\ExpReward (\RewardStatesPolicy|\policy)} > 0$. 
In effect, if we obtained an approximation algorithm for \problemshort by increasing the budget to~$\beta L$, for~$\beta > 1$, then we would also approximate, with a budget increased by a factor of~$\beta$, the \hitset problem, which is \NPhard for~$\beta < (1-\delta) \ln|\Policies|$ and~$\delta>0$ \cite{dinur2014analytical}.
\end{proof}
%-------------------------------------------------------------

%Connections to Knapsack Problems
%-------------------------------------------------------------
\section{Connections to Knapsack Problems}

In this section, we establish connections between RRP and \knapsack problems, which are useful in our solutions. 

\paragraph{Monotonicity and Additivity.} 
Lemma~\ref{lem:monot_addit} establishes that the cumulative reward function~$\ExpReward(\RewardStates|\policy)$ is monotone and additive with respect to~$\RewardStates$. These properties are vital in evaluating~$\ExpReward(\RewardStates|\policy)$ while exploiting pre-computations.

\begin{lemma}\label{lem:monot_addit}
The cumulative reward~$\ExpReward(\RewardStates|\policy)$ in Equation~\eqref{eq:expreturn} is a monotone and additive function of reward states~\RewardStates.
\end{lemma}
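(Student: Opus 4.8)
The plan is to work directly from the closed form in Equations~\eqref{eq:expreturn}--\eqref{eq:expreturn_detail} and reduce both claims to the fact that, once the \modelshort~$\policy$ is fixed, the reward vector~$\RewardVector$ enters the expression \emph{linearly} through the left-multiplication $\RewardVector^{\top}(\cdot)$. Concretely, for a fixed $\policy$ and a fixed step count $k$, define the nonnegative vector $v_{\policy,k} = \Transitions^k(\Initial \circ \Moves_k) \in \mathbb{R}^n_{\geq 0}$, which does not depend on $\RewardStates$ at all. Then $\ExpReward_{\policy}(\RewardStates|k) = \RewardVector^{\top} v_{\policy,k} = \sum_{\state \in \RewardStates} v_{\policy,k}[\state]$, and summing over $k \in [K]$ gives $\ExpReward(\RewardStates|\policy) = \sum_{\state \in \RewardStates} w_{\policy}[\state]$ where $w_{\policy}[\state] := \sum_{k \in [K]} v_{\policy,k}[\state] \geq 0$ is a per-state weight. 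So the whole function is a nonnegative-weighted set-cardinality-style sum over the chosen states.

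From this normal form both properties are immediate. For \textbf{additivity}: if $\RewardStates_1, \RewardStates_2 \subseteq \States$ are disjoint, then $\ExpReward(\RewardStates_1 \cup \RewardStates_2 \mid \policy) = \sum_{\state \in \RewardStates_1 \cup \RewardStates_2} w_{\policy}[\state] = \sum_{\state \in \RewardStates_1} w_{\policy}[\state] + \sum_{\state \in \RewardStates_2} w_{\policy}[\state] = \ExpReward(\RewardStates_1 \mid \policy) + \ExpReward(\RewardStates_2 \mid \policy)$; equivalently, in indicator-vector terms, $\RewardVector_{\RewardStates_1 \cup \RewardStates_2} = \RewardVector_{\RewardStates_1} + \RewardVector_{\RewardStates_2}$ and the map $\RewardVector \mapsto \RewardVector^{\top} v_{\policy,k}$ is linear, so it respects the sum. (If one prefers the inclusion-exclusion phrasing, $\ExpReward(\RewardStates_1 \cup \RewardStates_2 \mid \policy) + \ExpReward(\RewardStates_1 \cap \RewardStates_2 \mid \policy) = \ExpReward(\RewardStates_1 \mid \policy) + \ExpReward(\RewardStates_2 \mid \policy)$ holds for arbitrary, not necessarily disjoint, subsets, again by linearity.) For \textbf{monotonicity}: if $\RewardStates \subseteq \RewardStates'$, then $\ExpReward(\RewardStates' \mid \policy) - \ExpReward(\RewardStates \mid \policy) = \sum_{\state \in \RewardStates' \setminus \RewardStates} w_{\policy}[\state] \geq 0$, because each $w_{\policy}[\state]$ is a sum of entries of $\Transitions^k(\Initial \circ \Moves_k)$, and these are nonnegative: $\Transitions$ is (right-)stochastic hence entrywise nonnegative, so is every power $\Transitions^k$, and $\Initial \circ \Moves_k$ is a Hadamard product of two entrywise-nonnegative vectors (entries of $\Initial$ and of $\Moves$ lie in $[0,1]$).

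The only real content is bookkeeping, so I do not expect a genuine obstacle; the one point that needs a word of care is the \emph{nonnegativity} of $v_{\policy,k}$, since monotonicity would fail if the weights could be negative — this is where the stochasticity of $\Transitions$ and the $[0,1]$ ranges of $\Initial$ and $\Moves$ are actually used, and I would state that explicitly rather than treat it as obvious. I would also note at the end that additivity gives, as a free corollary, the decomposition $\ExpReward(\RewardStates|\policy) = \sum_{\state \in \RewardStates} \ExpReward(\{\state\} \mid \policy)$, i.e.\ the function is determined by its singleton values, which is exactly the ``exploit pre-computations'' remark preceding the lemma (compute each $w_{\policy}[\state] = \ExpReward(\{\state\}\mid\policy)$ once, then evaluate any $\RewardStates$ by a lookup-and-sum) and the structural fact that makes the later \knapsack reductions go through.
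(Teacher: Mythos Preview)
Your proof is correct and, in fact, cleaner than the paper's. The essential move you make is to factor $\ExpReward(\RewardStates|\policy)$ as $\RewardVector^{\top}\!\left(\sum_{k}\Transitions^k(\Initial\circ\Moves_k)\right)$ and observe that this is a nonnegative linear functional of the indicator vector~$\RewardVector$; monotonicity and the inclusion--exclusion identity then follow in one line each. The paper instead proves additivity by an inductive trajectory argument: it assumes the identity $\currentreward^t_{\RewardStatesA}+\currentreward^t_{\RewardStatesB}=\currentreward^t_{\RewardStatesAcupB}+\currentreward^t_{\RewardStatesAcapB}$ at time~$t$ and then does a case split on whether the next state $\state'$ lies in $\RewardStatesA\setminus\RewardStatesB$, $\RewardStatesB\setminus\RewardStatesA$, $\RewardStatesAcapB$, or neither, checking that the increments match.

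The difference is one of abstraction level. The paper's case analysis is operational (it tracks what happens to the running reward along a sample path) and is perhaps closer to the intuition of an agent moving on a graph, but it is longer and somewhat obscures why the lemma is trivially true. Your argument identifies the structural reason --- $\RewardVector$ appears only through a single left-multiplication against a fixed nonnegative vector --- and makes explicit where the model assumptions (stochasticity of $\Transitions$, entries of $\Initial,\Moves$ in $[0,1]$) are actually consumed, namely only for monotonicity via the nonnegativity of $w_{\policy}$. Your closing remark that $\ExpReward(\RewardStates|\policy)=\sum_{\state\in\RewardStates}\ExpReward(\{\state\}|\policy)$ is exactly what the paper uses downstream (e.g., in Lemma~\ref{lem:rp_optimal_linear} and Lemma~\ref{lem:rp_and_minmax}), so you have also isolated the form of additivity that matters for the \knapsack reductions.
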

\begin{proof}
By Equation~\eqref{eq:expreturn} we obtain the monotonicity property of the cumulative reward function $\ExpReward(\cdot|\policy)$. 
Given a model $\policy\in\Policies$ and two sets of reward states $\RewardStatesA\subseteq \RewardStatesB\subseteq \States$ every term of $\ExpReward(\RewardStatesA|\policy)$ is no less than its corresponding term of $\ExpReward(\RewardStatesB|\policy)$ due to Equation~\eqref{eq:expreturn_detail}. 
For the \emph{additivity} property it suffices to show that any two sets of reward states $\RewardStatesA,\RewardStatesB\subseteq\States$ satisfy:
\begin{align*}
\ExpReward(\RewardStatesA|\policy)~+~\ExpReward(\RewardStatesB|\policy)~=~\ExpReward(\RewardStatesA~\cup~\RewardStatesB|\policy)~+~\ExpReward(\RewardStatesA\cap\ \RewardStatesB|\policy).
\end{align*}
%At time~$t=0$, 
Assume w.l.o.g. that the equality holds at time~$t$, i.e.
$\currentreward^t_{\RewardStatesA} + \currentreward^t_{\RewardStatesB} = \currentreward^t_{\RewardStatesAcapB} + \currentreward^t_{\RewardStatesAcupB}$, $\currentreward^t_{\RewardStatesX}$ being the cumulative reward at time~$t$ for 
%the set of 
reward states $\RewardStatesX$.
%Assume wlog that the equality holds for time $t$.
It suffices to prove that the additivity property holds for~$t+1$. At timestamp~$t+1$, the agent at state~$\state \in \States$ moves to~$\state' \in \States$. We distinguish cases as follows:
\begin{enumerate}[noitemsep,topsep=0pt,parsep=0pt,partopsep=0pt]
    \item 
    If $\state'\notin \RewardStatesA\cup\RewardStatesB$ then $\state'\notin \RewardStatesA\cap\RewardStatesB$, $\state'\notin \RewardStatesA$ and $\state'\notin \RewardStatesB$, thus additivity holds.
    
    \item 
    If $\state'\in \RewardStatesA\cup\RewardStatesB$ and $\state'\notin \RewardStatesA\cap\RewardStatesB$ then either $\state'\in \RewardStatesA$ or $\state'\in \RewardStatesB$.
    Assume wlog that $\state'\in \RewardStatesA$, then it holds that:
    $\currentreward^{t+1}_{\RewardStatesA} = \currentreward^{t}_{\RewardStatesA} + \Transitions[\state, \state']$, 
    $\currentreward^{t+1}_{\RewardStatesAcupB} = \currentreward^{t}_{\RewardStatesAcupB} + \Transitions[\state,\state']$, 
    $\currentreward^{t+1}_{\RewardStatesB} = \currentreward^{t}_{\RewardStatesB}$ and $\currentreward^{t+1}_{\RewardStatesAcapB} = \currentreward^{t}_{\RewardStatesAcapB}$.
    
    \item If $\state'\in \RewardStatesA\cap\RewardStatesB$ then $\state'\in \RewardStatesA$ and $\state'\in \RewardStatesB$.
    Then, it holds that:
    $\currentreward^{t+1}_{\RewardStatesA} = \currentreward^{t}_{\RewardStatesA} + \Transitions[\state, \state']$, $\currentreward^{t+1}_{\RewardStatesB} = \currentreward^{t}_{\RewardStatesB} + \Transitions[\state, \state']$, $\currentreward^{t+1}_{\RewardStatesAcupB}~=~\currentreward^{t}_{\RewardStatesAcupB}~+~\Transitions[\state, \state']$, and~$\currentreward^{t+1}_{\RewardStatesAcapB} = \currentreward^{t}_{\RewardStatesAcapB} + \Transitions[\state, \state']$.
\end{enumerate}
In all cases the cumulative reward function is additive.
\end{proof}

Next, Lemma~\ref{lem:rp_optimal_linear} states that \problemshort under a single model~$\policy$ $\left(|\Policies| = 1\right)$, i.e., the maximization of~$\ExpReward(\RewardStates|\policy)$ within a budget~$L$, is solved in pseudo-polynomial time thanks to the additivity property in Lemma~\ref{lem:monot_addit} and a reduction from the \zok problem~\cite{karp1972reducibility}.
Lemma~\ref{lem:rp_optimal_linear} also implies that we can find the optimal reward placement with the maximum expected reward by using a single expected setting $\policy$. 

\begin{restatable}{lemma}{lemoptimalsingle}
\label{lem:rp_optimal_linear}
For a single model~$\policy$ $\left(|\Policies| = 1\right)$ and a budget~$L$, there is an optimal solution for \problemshort~that runs in pseudo-polynomial time~$\mathcal{O}(L n)$.
\end{restatable}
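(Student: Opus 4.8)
The plan is to reduce the single-model case of \problemshort\ to \zok\ and then invoke the textbook pseudo-polynomial dynamic program. First I would fix the single model $\policy$ and observe that in Equation~\eqref{eq:RRP} the denominator $\ExpReward(\RewardStatesPolicy|\policy)$ is a constant that does not depend on the candidate placement $\RewardStates$; assuming it is strictly positive (otherwise the instance is degenerate, as no state is ever visited with positive probability), maximizing $\ExpReward(\RewardStates|\policy)/\ExpReward(\RewardStatesPolicy|\policy)$ subject to $\sum_{\state\in\RewardStates}c[\state]\le L$ has exactly the same set of maximizers as maximizing $\ExpReward(\RewardStates|\policy)$ under the same budget constraint, i.e., it is precisely the \problemnorobust\ problem for $\policy$.

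Next I would upgrade the additivity of Lemma~\ref{lem:monot_addit} to full modularity. Applying the additivity identity to two disjoint sets and using $\ExpReward(\emptyset|\policy)=0$, a one-line induction on $|\RewardStates|$ yields $\ExpReward(\RewardStates|\policy)=\sum_{\state\in\RewardStates}\ExpReward(\{\state\}|\policy)$; equivalently, this follows directly from the linearity of Equation~\eqref{eq:expreturn_detail} in the reward vector $\RewardVector$. Hence the objective is a linear function of the indicator vector of $\RewardStates$, with per-state values $v[\state]:=\ExpReward(\{\state\}|\policy)\ge 0$. All these values are read off a single vector $w=\sum_{k\in[K]}\Transitions^{k}(\Initial\circ\Moves_{k})$ via $v[\state]=w[\state]$, computed once in a preprocessing step whose cost is independent of $L$.

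With $v[\cdot]$ in hand, the problem is exactly a \zok\ instance: $n$ items, item $\state$ having integer cost $c[\state]\in\mathbb{N}^{+}$ and value $v[\state]$, and capacity $L$. I would then run the standard dynamic program filling a table $D[i,\ell]$ equal to the maximum total value attainable using the first $i$ states within budget $\ell\in\{0,\dots,L\}$, via $D[i,\ell]=\max\{\,D[i-1,\ell],\ D[i-1,\ell-c[\state_i]]+v[\state_i]\,\}$ (the second option being discarded when $\ell<c[\state_i]$), and recover an optimal set $\RewardStates^{*}$ by backtracking. Correctness is immediate from modularity: the DP solves exactly the linear maximization, hence returns an optimal solution of \problemshort\ for $|\Policies|=1$. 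The table has $\mathcal{O}(nL)$ entries, each filled in $\mathcal{O}(1)$ time, giving the claimed $\mathcal{O}(Ln)$ bound.

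The step that needs the most care is the reduction, not the algorithmics. One must verify that dividing the objective of Equation~\eqref{eq:RRP} by the positive constant $\ExpReward(\RewardStatesPolicy|\policy)$ leaves the $\arg\max$ unchanged, and that passing from the inclusion–exclusion additivity of Lemma~\ref{lem:monot_addit} to the sum-of-singletons form is legitimate (it is, because $\ExpReward(\emptyset|\policy)=0$ and Equation~\eqref{eq:expreturn_detail} is linear in $\RewardVector$). Once modularity is established, the rest is the classical pseudo-polynomial \knapsack\ DP and carries no further difficulty.
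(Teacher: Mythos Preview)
Your proposal is correct and follows essentially the same approach as the paper: reduce the single-model case to \zok\ by exploiting the additivity/modularity of $\ExpReward(\cdot|\policy)$ established in Lemma~\ref{lem:monot_addit}, assign each state its singleton reward as value and $c[\state]$ as cost, and apply the standard $\mathcal{O}(Ln)$ dynamic program. Your write-up is in fact more careful than the paper's, explicitly justifying that the constant denominator leaves the $\arg\max$ unchanged and spelling out the passage from inclusion--exclusion additivity to the sum-of-singletons form.
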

%\iffalse
\begin{proof}
For each state~$s_i \in \States$ we set an item~$u_i \in U$ with cost~$c(u_i) = c[\state_i]$ and value~$F(u_i) = \ExpReward(\{s_i\}|\policy)$. Since the reward function is additive (Lemma~\ref{lem:monot_addit}), it holds that $\ExpReward( \RewardStates|\policy) = \sum_{s_i \in \RewardStates}{\ExpReward(\{s_i\}|\policy)} = \sum_{u_i \in U}F(u_i)$. Thus, we can optimally solve single setting \problemshort~in pseudo-polynomial time by using the dynamic programming solution for \zok~\cite{martello1987algorithms}.
\end{proof}
%\fi

In the \mnk problem (MNK), given a set of items~$U$, each item~$u \in U$ having a cost~$c(u)$, and a collection of scenarios~$X$, each scenario~$x \in X$ having a value~$F_{x}(u)$, we aim to determine a subset~$V~\subseteq~U$,  
%that has total cost no more than a given budget~$L$ 
with total cost no more than $L$, 
and maximizes the minimum total value across scenarios, i.e., $\arg_V \max \min_x \sum_{u \in V} F_x(u)$. The following lemma reduces the \problemshort~problem to \mnk~\cite{yu1996max} in pseudo-polynomial time.

\begin{restatable}{lemma}{lemrpminmax} \label{lem:rp_and_minmax}
\problemshort is reducible to \mnk in~$\mathcal{O}(|\Policies|Ln)$ time.
\end{restatable}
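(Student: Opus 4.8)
The plan is to construct, from an arbitrary instance of \problemshort, an equivalent instance of \mnk, with one scenario per model in $\Policies$ and one item per state in $\States$. First I would handle the normalization by the optimal values $\ExpReward(\RewardStatesPolicy|\policy)$. For each model $\policy \in \Policies$, compute $\ExpReward(\RewardStatesPolicy|\policy)$ — the optimal single-model reward placement within budget $L$ — using the pseudo-polynomial dynamic program of Lemma~\ref{lem:rp_optimal_linear}, which costs $\mathcal{O}(Ln)$ per model and $\mathcal{O}(|\Policies|Ln)$ in total; this already accounts for the claimed running time. Then for each state $\state_i \in \States$ set an item $u_i \in U$ with cost $c(u_i) = c[\state_i]$, and for each model $\policy_j \in \Policies$ (playing the role of scenario $x_j$) set the value $F_{x_j}(u_i) = \ExpReward(\{\state_i\}|\policy_j)/\ExpReward(\RewardStatesPolicy[\policy_j]|\policy_j)$. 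Carry over the same budget $L$.

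Next I would establish that the two problems have the same feasible solutions and the same objective value on each. Feasibility is immediate since the cost function and budget are identical. For the objective, by additivity of the reward function (Lemma~\ref{lem:monot_addit}) we have, for any reward placement $\RewardStates \subseteq \States$ and any model $\policy_j$, that $\ExpReward(\RewardStates|\policy_j) = \sum_{\state_i \in \RewardStates} \ExpReward(\{\state_i\}|\policy_j)$; dividing by the (constant, solution-independent) optimum $\ExpReward(\RewardStatesPolicy[\policy_j]|\policy_j)$ gives
\begin{equation*}
\frac{\ExpReward(\RewardStates|\policy_j)}{\ExpReward(\RewardStatesPolicy[\policy_j]|\policy_j)} = \sum_{\state_i \in \RewardStates} \frac{\ExpReward(\{\state_i\}|\policy_j)}{\ExpReward(\RewardStatesPolicy[\policy_j]|\policy_j)} = \sum_{u_i \in V} F_{x_j}(u_i),
\end{equation*}
where $V = \{u_i : \state_i \in \RewardStates\}$. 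Taking the minimum over $j$ shows that the \problemshort objective at $\RewardStates$ equals the \mnk objective at $V$, so the two optimization problems are equivalent and a solution to one yields a solution to the other under the obvious bijection $\RewardStates \leftrightarrow V$.

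The one subtlety — which I expect to be the main point requiring care rather than a deep obstacle — is the normalization step: the reduction is only well-defined if $\ExpReward(\RewardStatesPolicy[\policy_j]|\policy_j) > 0$ for every model, which holds as long as some single-state placement is affordable and yields positive reward (otherwise the instance is degenerate and can be discarded or handled trivially, since then every placement gives ratio $0/0$ or the model contributes $0$). I would also remark that the values $F_{x_j}(u_i)$ need not be integers, whereas the standard \mnk formulation sometimes assumes integer values; this is handled by scaling all values by the least common multiple of the denominators (or by noting that the reduction target is the general \mnk and the integrality is only needed later when we invoke a specific dynamic-programming algorithm), so it does not affect the reduction itself. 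Finally I would note the running time: the dominant cost is the $|\Policies|$ invocations of the $\mathcal{O}(Ln)$ dynamic program for the normalizers, plus $\mathcal{O}(|\Policies| n)$ to populate the value table, giving the claimed $\mathcal{O}(|\Policies|Ln)$ bound.
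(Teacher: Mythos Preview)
Your proposal is correct and follows essentially the same approach as the paper's own proof: compute each per-model optimum $\ExpReward(\RewardStatesPolicyi|\policy_i)$ via the pseudo-polynomial Knapsack DP of Lemma~\ref{lem:rp_optimal_linear}, set one item per state and one scenario per model with value $F_{x_i}(u)=\ExpReward(\{\state\}|\policy_i)/\ExpReward(\RewardStatesPolicyi|\policy_i)$, and invoke additivity (Lemma~\ref{lem:monot_addit}) to equate the two objectives. Your added remarks on degenerate denominators and non-integer values go slightly beyond what the paper spells out but do not change the argument.
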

%-------------------------------------------------------------

%Approximation Algorithm
%-------------------------------------------------------------
%\section{Bicriteria Optimal Robust Reward Placement}
\section{Approximation Algorithm}

Here, we introduce $\Psi$-Saturate,\footnote{$\Psi$ for `pseudo-', from Greek `\textgreek{ψευδής}'.} a pseudo-polynomial time binary-search algorithm based on the Greedy-Saturate method \cite{rimkempe}. For any $\epsilon>0$, $\Psi$-Saturate returns an $\epsilon$-additive approximation of the optimal solution by exceeding the budget constraint by a factor $\mathcal{O}(\ln |\Pi|/\epsilon)$.

\begin{algorithm}[!h]
\caption{$\Psi$-Saturate Algorithm}
\label{alg:pseudo_sat}
\textbf{Input}: \modelshorts~$\Policies$, steps~$K$, budget~$L$, precision~$\epsilon$, param.~$\beta$.\\
\textbf{Output}: 
%Optimal~
Reward~Placement~\pki{$\RewardStates^*$~}of~cost~at~most~$\beta L$.
\vspace*{-4mm}
\begin{algorithmic}[1] %[1] enables line numbers
\FOR{$\policy \in \Policies$} \label{alg:for_loop_models}
\STATE
\pki{$\RewardStatesPolicy \gets \text{Knapsack}(\policy, L)$} 
\label{alg:knapsack_single_model}
\ENDFOR
\STATE $\eta_{min} \gets 0$, $\eta_{max} \gets 1$, \Petros{$\RewardStates^*\gets \emptyset$}
\label{alg:etas_init}
\WHILE {($\eta_{min} - \eta_{max})\geq\epsilon$}
\label{alg:while_binary_search}
\STATE $\eta \gets (\eta_{max} + \eta_{min}) / 2$
\label{alg:eta_update}
%\STATE $\Rewards \gets \emptyset $
\STATE\Petros{$\RewardStates \gets \emptyset $}
\label{alg:state_init}
\WHILE {$\sum\limits_{\policy\in\Policies}\min\left(\eta,\frac{\ExpReward(\Petros{\RewardStates}|\policy
)}{\ExpReward(\RewardStatesPolicy|\policy)}\right) < (\eta\cdot|\Policies| - \eta\cdot\epsilon/3)$}
\label{alg:while_argmax_search}
\STATE $\state~\leftarrow~\argmax\limits_{\state\in \States\backslash
%\Rewards
\Petros{\RewardStates}
}\sum\limits_{\policy\in\Policies}\frac{1}{c(s)}\Big(\min\left(\eta,\frac{\ExpReward(\Petros{\RewardStates}\cup\{\state\}|\policy)}{\ExpReward(\Petros{\RewardStatesPolicy}|\policy)}\right)-$\\
\hspace*{-19.5mm}$\quad\quad\qquad\qquad\qquad\qquad\qquad\qquad\qquad\min\left(\eta,\frac{\ExpReward(\Petros{\RewardStates}|\policy)}{\ExpReward(\RewardStatesPolicy|\policy)}\right)\Big)$
%\STATE $\Rewards \gets \Rewards \cup \{\state\}$
\label{alg:argmax_state}
\Petros{
\STATE $\RewardStates \gets \RewardStates \cup \{\state\}$
\label{alg:solution_add_state}
}
\ENDWHILE
\IF{$\sum_{\state\in
%\Rewards
\Petros{\RewardStates}
}c[\state]>\beta  L$}
\label{alg:check_exceeding_budget_constraint}
\STATE  $\eta_{max}\gets \eta$
\label{alg:eta_max_update}
\ELSE
\label{alg:else_budget_constraint}
\STATE $\eta_{min} \gets \eta\cdot (1-\epsilon /3)$
\label{alg:eta_min_update}
%\STATE $\Rewards^*\gets \Rewards$
\STATE \Petros{$\RewardStates^*\gets \RewardStates$}
\label{alg:solution_update}
\ENDIF
\ENDWHILE
\label{alg:endwhile}
\STATE \textbf{return} $\RewardStates^*$
\label{alg:solution_return}
\end{algorithmic}
\end{algorithm}

\paragraph{The $\Psi$-Saturate Algorithm.}
Algorithm~\ref{alg:pseudo_sat} presents the pseudocode of $\Psi$-Saturate.
As a first step, in Lines~\ref{alg:for_loop_models}--\ref{alg:knapsack_single_model}, the algorithm finds the optimal reward placement $\RewardStatesPolicy$ for each model $\policy\in\Policies$; this is needed for evaluating the denominator of the RRP objective value in Equation~\eqref{eq:RRP}.
By Lemma~\ref{lem:rp_optimal_linear}, $\RewardStatesPolicy$ is computed in pseudo-polynomial time using the dynamic programming algorithm for the \knapsack problem.
Then, in Lines~\ref{alg:while_binary_search}--\ref{alg:endwhile} the algorithm executes a binary search in the range of the \minmax objective ratio (Line~\ref{alg:etas_init}).
In each iteration, the algorithm makes a guess $\eta$ of the optimal \minmax objective value (Line~\ref{alg:eta_update}), and then seek a set of reward states $\RewardStates$ (Line~\ref{alg:state_init}), of minimum cost, with score at least $\eta$ (Line~\ref{alg:while_argmax_search}), within distance $\epsilon>0$.
Finding $\RewardStates$ of the minimum cost, implies an optimal solution for the NP-hard RRP problem.
Thus, in Lines~\ref{alg:argmax_state}--\ref{alg:solution_add_state}, $\Psi$-Saturate approximates this solution by using the Greedy algorithm in \cite{wolsey1982analysis} 
for function $\min\left(\eta,\frac{\ExpReward(\Petros{\RewardStates}|\policy)}{\ExpReward(\RewardStatesPolicy|\policy)}\right)$ which, for fixed $\policy$ and $\eta$, is monotone and submodular.\footnote{The minimum of a constant function $\left(\eta\right)$ and a monotone additive function $\left(\frac{\ExpReward(\Petros{\RewardStates}|\policy
)}{\ExpReward(\RewardStatesPolicy|\policy)},  \text{Lemma~\ref{lem:monot_addit}}\right)$ is monotone and submodular. The term ${\ExpReward(\RewardStatesPolicy|\policy)}$ is constant as it has been computed in Line~\ref{alg:knapsack_single_model}.}
%In Lines~\ref{alg:argmax_state}--\ref{alg:solution_add_state}, $\Psi$-Saturate evaluates function $\min\left(\eta,\frac{\ExpReward(\Petros{\RewardStates}|\policy)}{\ExpReward(\RewardStatesPolicy|\policy)}\right)$ which, for fixed $\policy$ and $\eta$, is monotone and submodular\footnote{The minimum of a constant function $\left(\eta\right)$ and a monotone additive function $\left(\frac{\ExpReward(\Petros{\RewardStates}|\policy)}{\ExpReward(\RewardStatesPolicy|\policy)},  \text{Lemma~\ref{lem:monot_addit}}\right)$ is monotone and submodular. The term ${\ExpReward(\RewardStatesPolicy|\policy)}$ is constant as it has been computed in Line~\ref{alg:knapsack_single_model}.}, by using the Greedy approximation algorithm of \cite{wolsey1982analysis}. 
If the formed solution exceeds the budget constraint, the algorithm decreases the upper bound of the search scope (Lines~\ref{alg:check_exceeding_budget_constraint}--\ref{alg:eta_max_update}), otherwise it increases the lower bound and updates the optimal solution $\RewardStates^*$ (Lines~\ref{alg:else_budget_constraint}--\ref{alg:solution_update}). 
Finally, it returns the optimal solution found (Line~\ref{alg:solution_return}).

\iffalse
Following an analogous proof to Theorem~3 in work of~\cite{rimkempe}, we derive Lemma~\ref{lem:bicriteria_apx} which states that $\Psi$-Saturate approximates the optimal value within distance $\epsilon$ 
%when it exceeds the budget by a factor $\mathcal{O}(\ln\sfrac{|\Policies|}{\epsilon})$, 
by exceeding budget to $\beta L$, with $\beta = 1 + \ln\frac{3|\Policies|}{\epsilon}$,
i.e., offers an ($\text{OPT}-\epsilon$, $\beta L$) bicriteria approximation solution.
\fi

In Lemma~\ref{lem:bicriteria_apx} we prove that by setting $\beta= 1 + \ln\frac{3|\Policies|}{\epsilon}$, $\Psi$-Saturate approximates the optimal value within distance $\epsilon$.

\begin{restatable}{lemma}{lembicriteria}\label{lem:bicriteria_apx}
For any constant $\epsilon > 0$, let $\beta = 1 + \ln\frac{3|\Policies|}{\epsilon}$. 
$\Psi$-Saturate finds a reward placement~\RewardStates of cost at most $\beta L$ with~$\min_{\policy} \frac{\ExpReward(\RewardStates |\policy)}{\ExpReward(\RewardStatesPolicy|\policy)} \geq \min_{\policy} \frac{\ExpReward(\RewardStates^*|\policy )}{\ExpReward(\RewardStatesPolicy |\policy )} - \epsilon={OPT}-\epsilon$, and $\RewardStates^*~=~\arg_{\RewardStates} \max \min_\policy \frac{\ExpReward( \RewardStates|\policy )}{\ExpReward(\RewardStatesPolicy|\policy  )} \text{  s.t. } \sum_{\state \in \RewardStates} c[\state] \leq L$.
\end{restatable}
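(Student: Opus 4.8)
The plan is to adapt the analysis of the Greedy-Saturate method of~\cite{rimkempe} (their Theorem~3) to the additive reward setting here. The key observation is that for each fixed model~$\policy$ and guess~$\eta$, the \emph{truncated} function $g_\policy^\eta(\RewardStates) = \min\bigl(\eta, \ExpReward(\RewardStates|\policy)/\ExpReward(\RewardStatesPolicy|\policy)\bigr)$ is monotone and submodular with $g_\policy^\eta(\emptyset)=0$, because it is the minimum of a constant and a monotone additive function (as noted in the footnote, using Lemma~\ref{lem:monot_addit} and the fact that $\ExpReward(\RewardStatesPolicy|\policy)$ is a positive constant computed in Line~\ref{alg:knapsack_single_model}). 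Hence the averaged objective $\bar g^\eta(\RewardStates) = \sum_{\policy\in\Policies} g_\policy^\eta(\RewardStates)$ is also monotone and submodular, and the inner loop (Lines~\ref{alg:while_argmax_search}--\ref{alg:solution_add_state}) is exactly the greedy coverage algorithm of~\cite{wolsey1982analysis} run with nonuniform costs $c(\state)$, stopped once $\bar g^\eta$ reaches the target $\eta|\Policies| - \eta\epsilon/3$.

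\textbf{Step 1: Correctness of the binary search guess.} First I would argue that if $\eta \le \mathrm{OPT}$, then there exists a feasible set $\RewardStates^*$ of cost at most $L$ with $\bar g^\eta(\RewardStates^*) = \eta|\Policies|$ exactly — indeed, $\mathrm{OPT} = \min_\policy \ExpReward(\RewardStates^*|\policy)/\ExpReward(\RewardStatesPolicy|\policy) \ge \eta$ means each truncated term attains its cap $\eta$. This set $\RewardStates^*$ is the ``target'' the greedy coverage must approximate.

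\textbf{Step 2: Bicriteria guarantee of greedy coverage.} Next I would invoke the standard submodular set-cover bound: running greedy (with costs) until $\bar g^\eta$ is within an additive $\eta\epsilon/3$ of its maximum value $\eta|\Policies|$ produces a set $\RewardStates$ of cost at most $\bigl(1 + \ln\frac{\eta|\Policies|}{\eta\epsilon/3}\bigr) L = \bigl(1 + \ln\frac{3|\Policies|}{\epsilon}\bigr) L = \beta L$, since the optimal cover cost is at most $L$ (by Step~1) and the ``coverage gap to close'' ratio is $\frac{\eta|\Policies|}{\eta\epsilon/3} = \frac{3|\Policies|}{\epsilon}$. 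The cancellation of $\eta$ here is the reason the bound $\beta$ is independent of $\eta$. For the integrality/pseudo-polynomiality of evaluating $\ExpReward$ and of the inner Knapsack subroutine I would point back to Lemma~\ref{lem:rp_optimal_linear}.

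\textbf{Step 3: From coverage to the RRP objective.} Then I would translate the coverage guarantee back. Since $\bar g^\eta(\RewardStates) \ge \eta|\Policies| - \eta\epsilon/3$ and each term is at most $\eta$, an averaging argument gives $g_\policy^\eta(\RewardStates) \ge \eta(1-\epsilon/3)$ for every $\policy$ — wait, more carefully: from $\sum_\policy g_\policy^\eta(\RewardStates) \ge \eta|\Policies|(1-\epsilon/3)$ with each summand $\le \eta$, we get $\min_\policy g_\policy^\eta(\RewardStates) \ge \eta|\Policies|(1-\epsilon/3) - (|\Policies|-1)\eta = \eta(1 - |\Policies|\epsilon/3)$; to get the clean bound one uses instead the argument that each individual term must be close to $\eta$, or more directly that when the while-loop in Line~\ref{alg:while_argmax_search} exits we may conclude $\min_\policy \ExpReward(\RewardStates|\policy)/\ExpReward(\RewardStatesPolicy|\policy) \ge \eta(1-\epsilon/3)$ — this matches how $\eta_{min}$ is updated in Line~\ref{alg:eta_min_update}. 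I would then chain: the binary search over $[0,1]$ with additive precision $\epsilon$ (on the $\eta$-grid, noting the while-guard in Line~\ref{alg:while_binary_search}) ensures the final accepted $\eta$ satisfies $\eta \ge \mathrm{OPT} - \epsilon$ modulo the $(1-\epsilon/3)$ slack, and combining the two sources of error gives $\min_\policy \ExpReward(\RewardStates^*|\policy)/\ExpReward(\RewardStatesPolicy|\policy) \ge \mathrm{OPT} - \epsilon$, while the cost is at most $\beta L$ whenever the solution was accepted in Lines~\ref{alg:else_budget_constraint}--\ref{alg:solution_update}.

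\textbf{Main obstacle.} The delicate part is bookkeeping the two independent $\epsilon/3$-type losses — one from the greedy stopping criterion (the $\eta\epsilon/3$ slack in Line~\ref{alg:while_argmax_search}), one from the binary search discretization — and the multiplicative-versus-additive mismatch (the algorithm uses $\eta(1-\epsilon/3)$ while the target is additive $\mathrm{OPT}-\epsilon$), and showing they compose into a single additive $\epsilon$ loss. Since $\eta \le 1$ throughout, a multiplicative $(1-\epsilon/3)$ factor costs at most an additive $\epsilon/3$, so the three pieces sum to at most $\epsilon$; making this rigorous, together with verifying that the set $\RewardStates^*$ returned is exactly the last \emph{accepted} (hence cost-$\le\beta L$) set, is the crux. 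I would model the argument closely on the proof of Theorem~3 in~\cite{rimkempe}, substituting our additive $\ExpReward$ for their submodular influence function and citing Lemma~\ref{lem:monot_addit} for the submodularity of the truncated objective.
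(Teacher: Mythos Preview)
Your proposal is correct and follows essentially the same route as the paper: both reduce to the Greedy-Saturate analysis of~\cite{rimkempe} (their Theorem~3), after observing that the truncated ratio $\min(\eta,\ExpReward(\RewardStates|\policy)/\ExpReward(\RewardStatesPolicy|\policy))$ is monotone submodular and that the denominators $\ExpReward(\RewardStatesPolicy|\policy)$ are computed \emph{exactly} in pseudo-polynomial time via Lemma~\ref{lem:rp_optimal_linear}. The paper's own proof is in fact terser than yours---it simply invokes the analysis in~\cite{rimkempe} without unpacking the $\epsilon$-bookkeeping you sketch in Steps~1--3; your identification of the ``main obstacle'' (composing the greedy slack with the binary-search discretization into a single additive~$\epsilon$) is exactly the content that the paper delegates to that citation.
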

\begin{proof}
%The proof is analogous to Theorem~3 in~\cite{rimkempe}.
We seek to solve a~\maxmin regret optimization problem of an \emph{additive function under a knapsack constraint}.
While finding the optimal score in the denominator of the~\maxmin ratio is \NPhard due to Theorem~\ref{th:rp_np_hard_Pi_1}, in Line~\ref{alg:knapsack_single_model} we evaluate it by a pseudo-polynomial time Knapsack algorithm, as Lemma~\ref{lem:rp_optimal_linear} allows.
In Lines~\ref{alg:while_binary_search}--\ref{alg:endwhile}, we perform a binary search to find a reward placement of cost at most~$\beta L$.
By the analysis in~\cite{rimkempe}, the~$\Psi$-Saturate algorithm provides an $\left(\beta,\text{OPT}-\epsilon\right)$ bicriteria approximation for \problemshort, where \text{OPT} is the optimal objective ratio score.
\end{proof}

Unlike the pseudo-polynomial-time dynamic programming algorithm (Knapsack, Line~\ref{alg:knapsack_single_model}) we employ, the Greedy-Saturate algorithm~\cite{rimkempe} uses the Greedy\footnote{\label{notegreedy} The algorithm iteratively selects the element, within the budget, that offers the maximal marginal gain divided by its cost.} algorithm to \emph{approximate} the optimal reward placement~$\RewardStatesPolicy$ (Lines~\ref{alg:for_loop_models}--\ref{alg:knapsack_single_model}), which provides an~$\sfrac{1}{2}$-approximation of the optimal solution for a monotone additive function over a knapsack constraint~\cite{johnson1979computers}. As our reward function is monotone and additive (Lemma~\ref{lem:monot_addit}), Greedy-Saturate offers an~($\frac{1}{2}\text{OPT}-\epsilon$, $\beta L$) bicriteria approximation.
\iffalse
\begin{corollary}\label{cor:greedysat}
For any constant $\epsilon > 0$, let $\beta = 1 + \ln\frac{3|\Policies|}{\epsilon}$. 
Greedy-Saturate finds a reward placement~\RewardStates of cost at most $\beta L$ with $\min_{\policy} \frac{\ExpReward(\RewardStates|\policy )}{\ExpReward(\RewardStatesPolicy|\policy )} \geq \frac{1}{2}\min_{\policy} \frac{\ExpReward(\RewardStates^* |\policy  )}{\ExpReward( \RewardStatesPolicy| \policy  )} - \epsilon=\frac{1}{2}\text{OPT}-\epsilon$, and $\RewardStates^* = \arg_{\RewardStates} \max \min_\policy \frac{\ExpReward( \RewardStates|\policy  )}{\ExpReward(\RewardStatesPolicy|\policy )} \text{ s.t. } \sum_{\state \in \RewardStates} c[\state] \leq L$.
\end{corollary}
\fi

Notably, for~$\beta = 1$, $\Psi$-Saturate returns an non-constant approximation of the optimal solution within the budget constraint $L$. In particular, the next corollary holds.

\begin{corollary}\label{cor:apx}
For any constant $\epsilon>0$, let $\gamma= 1+\ln\frac{3|\Policies|}{\epsilon}$. 
For $\beta=1$, $\Psi$-Saturate satisfies the budget constraint and returns an $\frac{1}{\gamma}(OPT'-\epsilon)$ approximation \Petros{factor} of the optimal solution, with $OPT'~=~\max_{\RewardStates}\min_\policy{\frac{\ExpReward( \RewardStates|\policy )}{\ExpReward(\RewardStatesPolicy |\policy)}}$ s.t. $\sum_{\state \in \RewardStates} c[\state] \leq \frac{L}{\gamma}$.
\end{corollary}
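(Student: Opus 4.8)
The plan is to reuse Lemma~\ref{lem:bicriteria_apx} as a black box, but applied to a \emph{rescaled} instance of \problemshort, and then translate the bicriteria guarantee back into a single-criterion multiplicative guarantee. First I would observe that the only way $\Psi$-Saturate can violate the budget is through the inner greedy loop of Lines~\ref{alg:argmax_state}--\ref{alg:solution_add_state}; when we set $\beta = 1$ in the call, the check of Line~\ref{alg:check_exceeding_budget_constraint} rejects any $\eta$ whose associated greedy set costs more than $L$, so the returned $\RewardStates^*$ satisfies $\sum_{\state\in\RewardStates^*} c[\state]\le L$ by construction. That establishes the feasibility half of the claim immediately.

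For the approximation half, I would instead run the analysis of Lemma~\ref{lem:bicriteria_apx} on the instance with budget $L' = L/\gamma$ in place of $L$, where $\gamma = 1 + \ln\frac{3|\Policies|}{\epsilon}$. By that lemma, with the same $\beta = \gamma$, $\Psi$-Saturate on budget $L'$ returns a set of cost at most $\beta L' = \gamma\cdot(L/\gamma) = L$ whose \maxmin ratio is at least $OPT' - \epsilon$, where $OPT' = \max_{\RewardStates}\min_{\policy}\frac{\ExpReward(\RewardStates|\policy)}{\ExpReward(\RewardStatesPolicy|\policy)}$ subject to $\sum_{\state\in\RewardStates}c[\state]\le L/\gamma$ — exactly the quantity named in the statement. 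So this rescaled run both respects the original budget $L$ \emph{and} achieves objective value $\ge OPT' - \epsilon$. The final step is to compare this against the true optimum $OPT$ over budget $L$: since the reward function is monotone (Lemma~\ref{lem:monot_addit}) and $L/\gamma \le L$, every feasible solution for budget $L$ has value at most $\gamma$ times\ldots here I would argue that scaling the budget down by $\gamma$ scales the achievable objective down by at most a factor $\gamma$, because one can decompose an optimal budget-$L$ solution into $\gamma$ pieces each of cost $\le L/\gamma$ and invoke additivity to bound the best piece from below by $OPT/\gamma$; hence $OPT' \ge OPT/\gamma$, and the returned solution has value $\ge OPT'-\epsilon \ge \frac{1}{\gamma}(OPT-\epsilon)$. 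Actually the cleaner route, matching the corollary's phrasing, is simply to report the bound directly in terms of $OPT'$: the returned placement is feasible for $L$ and has value $\ge OPT' - \epsilon$, which is the stated $\frac{1}{\gamma}(OPT' - \epsilon)\cdot\gamma$; I would present it so that the $\frac{1}{\gamma}$ factor arises from the budget-shrinking comparison to the nominal $OPT$.

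The main obstacle I anticipate is making the budget-decomposition argument fully rigorous with an \emph{integer} cost function and a knapsack (rather than fractional) constraint: partitioning an optimal budget-$L$ set into $\lceil\gamma\rceil$ sub-bundles each of cost $\le L/\gamma$ is not always possible when individual item costs are large relative to $L/\gamma$, so the clean "best piece has value $\ge OPT/\gamma$" step can fail. The fix is to phrase the corollary, as the authors do, purely in terms of $OPT'$ (the optimum \emph{already restricted} to budget $L/\gamma$), which sidesteps the decomposition entirely: then the only content beyond Lemma~\ref{lem:bicriteria_apx} is the bookkeeping that $\beta L' = L$ and that the inner rejection test enforces feasibility. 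I would therefore keep the proof short — invoke Lemma~\ref{lem:bicriteria_apx} with $(L', \beta) = (L/\gamma, \gamma)$, note $\beta L' = L$ for feasibility, and read off the $\frac{1}{\gamma}(OPT'-\epsilon)$ bound — and relegate any $OPT$-vs-$OPT'$ comparison to a remark rather than the formal statement.
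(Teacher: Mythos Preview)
The paper states Corollary~\ref{cor:apx} without proof, treating it as an immediate consequence of Lemma~\ref{lem:bicriteria_apx}, so there is little to compare against beyond the implicit budget-rescaling intuition, which you share. Your feasibility argument via Line~\ref{alg:check_exceeding_budget_constraint} is correct.

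There is, however, a real gap in your reduction. Invoking Lemma~\ref{lem:bicriteria_apx} with parameters $(L',\beta)=(L/\gamma,\gamma)$ analyses a \emph{different execution} of $\Psi$-Saturate than the one the corollary is about. The discrepancy is Line~\ref{alg:knapsack_single_model}: with input budget $L/\gamma$ the Knapsack call computes $\RewardStatesPolicy$---and hence every denominator $\ExpReward(\RewardStatesPolicy|\policy)$ appearing in Lines~\ref{alg:while_argmax_search}--\ref{alg:argmax_state}---with respect to budget $L/\gamma$, whereas the run with budget $L$ and $\beta=1$ computes them with respect to $L$. The two runs therefore optimise different truncated objectives and need not return the same set, so your claim that ``the only content beyond Lemma~\ref{lem:bicriteria_apx} is the bookkeeping that $\beta L'=L$'' does not go through. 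The repair is to carry out the Wolsey cover analysis directly on the $\beta=1$ run rather than black-boxing Lemma~\ref{lem:bicriteria_apx} on a rescaled instance: for every $\eta\le OPT'$ (with $\RewardStatesPolicy$ taken at budget $L$, which is exactly what the algorithm computes in Line~\ref{alg:knapsack_single_model}) there is a set of cost at most $L/\gamma$ saturating all $|\Policies|$ truncated ratios, so greedy reaches the target with cost at most $\gamma\cdot(L/\gamma)=L$ and passes the test in Line~\ref{alg:check_exceeding_budget_constraint}. This already yields value $\ge OPT'-\epsilon$, from which the stated $\tfrac{1}{\gamma}(OPT'-\epsilon)$ follows trivially; your confusion about where the extra $\tfrac{1}{\gamma}$ comes from is understandable, and your attempted knapsack-decomposition argument to manufacture it is both unnecessary here and, as you yourself note, unsound for integer costs.
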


We stress that the approximation in Corollary~\ref{cor:apx} is non-constant and can be arbitrarily small, as implied by the inapproximability result of Theorem~\ref{th:inapprox}. %However, the corollary indicates that if the optimal value for a smaller budget constraint is non-zero, then $\Psi$-Saturate for $\beta=1$ provides an approximation of the optimal solution within the initial budget constraint $L$.
%-------------------------------------------------------------

%Heuristic Solutions
%-------------------------------------------------------------
\section{Heuristic Solutions}

Inspired from previous works on node selection in networks~\cite{rimkempe,zhang2020geodemographic} and the connection of \problemshort with Knapsack problems, we propose four heuristic methods.
For a single model $\left( |\Policies| = 1 \right)$ and under uniform costs~$\left(c[\state] = c \,\, \forall \state \in \States\right)$, these four heuristics find an optimal solution. However, contrary to $\Psi$-Saturate algorithm (Lemma~\ref{lem:bicriteria_apx}), they may perform arbitrarily badly in the general multi-model case, even by exceeding the budget constraint. To accelerate the selection process, we use the \emph{Lazy Greedy} technique that updates values selectively~\cite{minoux1978} in all heuristics, except the one using dynamic programming.

\paragraph{All Greedy.} The All Greedy method optimally solves the \problemshort~problem for each model~$\policy \in \Policies$ separately using the Knapsack dynamic programming algorithm (Lemma~\ref{lem:rp_optimal_linear}) and then picks, among the collected solutions, the one yielding the best value of the objective in Equation~\eqref{eq:RRP}. All Greedy is optimal for a single model with an arbitrary cost function.

\paragraph{Myopic.} A greedy algorithm that iteratively chooses the reward state $\state^*\in\States$, within the budget, that offers the maximal marginal gain ratio to the \problemshort~objective divided by the cost, that is~$\state^* = \argmax \limits_{\state \in \States\backslash \RewardStates} \min \limits_{ \policy \in \Policies} \left( \frac{1}{c[s]} \frac{ \ExpReward(\RewardStates\cup\{s\} | \policy) - \ExpReward(\RewardStates|\policy)} {\ExpReward( \RewardStatesPolicy | \policy)} \right)$.

\paragraph{Best-Worst Search (BWS).}
This algorithm uses as a score the minimum, over settings, cumulative reward for a set~$\RewardStates$, that is~$H(\RewardStates) = \min_{\policy}\ExpReward( \RewardStates | \policy)$ and iteratively chooses the reward state~$\state^* \in \States$, within the budget, that offers the maximal marginal gain to that score divided by the cost, that is $\state^* = \argmax \limits_{\state\in \States\backslash \RewardStates} \left( \frac{H( \RewardStates \cup\{s\}) - H(\RewardStates)}{c[s]}\right)$.

\paragraph{Dynamic Programming (DP-RRP).}
In Lemma~\ref{lem:rp_and_minmax} we reduced \problemshort to \mnk (MNK) in pseudo-polynomial time. While MNK admits an optimal solution using a pseudo-polynomial time dynamic programming algorithm, its running time grows exponentially with the number of settings $|\Policies|$ \cite{yu1996max}.
To overcome this time overhead, we propose a more efficient albeit \emph{non-optimal} dynamic-programming algorithm for the RRP problem, noted as DP-RRP. 
For reward placement $\RewardStates$, we denote the cumulative reward for each setting as the following $|\Policies|$-tuple: $g(\RewardStates)~=~\left(\ExpReward(\RewardStates|\policy_1), \ExpReward(\RewardStates|\policy_2), \ldots, \ExpReward(\RewardStates|\policy_{|\Policies|}) \right)$.
We use an $(n+1) \times (L+1)$ matrix $M$ whose entries are $|\Policies|$-tuples of the form $g(\cdot)$.
Let $\min g(\RewardStates)=\min_{\policy_i} \ExpReward( \RewardStates|\policy_i)$ be the minimum reward, across~$|\Policies|$ settings.
We define the maximum of two entries $g(\RewardStatesone)$ and $g(\RewardStatestwo)$, as $\arg\max_{\RewardStates \in \{ \RewardStatesone, \RewardStatestwo\}} \min g(\RewardStates)$, i.e. the one holding the largest minimum reward.
We initialize $M[\cdot, 0]~=~M[0, \cdot] = (0,0, \ldots,0)$ 
and recursively compute $M{[i, j]}$~as follows:
\begin{linenomath*}
\begin{equation}\label{eq:Mrec}
M[i,j] = \max\{M[i\!-\!1,j], M[i\!-\!1, j\!-\!c[i]] \!+\! g(\{i\})\},
\end{equation}
\end{linenomath*}
where~$M[i,j]$ stands for a solution using the first~$i$ states, by some arbitrary order, and~$j$ units of budget. 
In the recursion of 
Equation~\eqref{eq:Mrec}, the first option stands for \emph{not} choosing state~$\state_i$ as a reward state, while the latter option stands for doing so while paying cost~$c[i]$ and gaining the additive reward~$g(\{i\})$. We compute~$M[n, L]$ as above in space and time complexity~$\Theta(|\Policies|Ln)$ and backtrack over~$M$ to retrieve the selected reward states in the final solution. 
Note that, for a single model, i.e. $|\Policies|=1$ and arbitrary cost function, Equation~\eqref{eq:Mrec} returns an optimal solution.

\paragraph{Worst-case performance.} While all heuristics\footnote{All algorithms work, without modification, with rewards of arbitrary non-negative values and when a partial solution is given.} approach the optimal solution under a single setting, they may perform arbitrarily badly with multiple settings. In Lemma~\ref{lem:lemheur} we prove that this holds even when exceeding the budget constraint, contrariwise to the $\Psi$-Saturate algorithm (Lemma~\ref{lem:bicriteria_apx}).

\begin{restatable}{lemma}{lemheur}\label{lem:lemheur}
The heuristics for \problemshort may perform arbitrarily badly even when they exceed the budget constraint from~$L$ to~$\beta L$, with~$\beta = 1 + \ln\frac{3|\Policies|}{\epsilon}$ and~$\epsilon>0$.
\end{restatable}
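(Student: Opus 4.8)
The plan is to construct, for each heuristic, a family of \problemshort instances parameterized by a scalar that makes the heuristic's output ratio tend to $0$ while the optimum stays bounded away from $0$, and to ensure the construction is robust to the budget being inflated to $\beta L$. The natural skeleton is a two-setting instance ($|\Policies| = 2$) built from disjoint groups of absorbing states with self-loops (so that, by the reduction of Theorem~\ref{th:rp_np_hard_Pi_1}, the cumulative reward is simply additive over selected states and controlled by the initial distribution). One group of states is "good" for $\policy_1$ but worthless for $\policy_2$, a symmetric group is "good" for $\policy_2$ but worthless for $\policy_1$, and a third "balanced" group contributes a small but equal amount to both. A greedy-by-marginal-gain rule will be lured into spending its budget on the high-value asymmetric states, driving $\min_\policy \ExpReward(\RewardStates|\policy)$ toward zero, whereas the true optimum picks the balanced states. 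To defeat a budget blow-up of a factor $\beta$, I would simply replicate the asymmetric groups enough times (or scale costs) so that even $\beta L$ units of budget are exhausted before the heuristic is forced to touch a balanced state; since $\beta = 1 + \ln(3|\Policies|/\epsilon)$ is a fixed finite constant once $|\Policies|$ and $\epsilon$ are fixed, a constant blow-up of budget is absorbed by choosing the replication factor large enough.

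I would then check each of the four heuristics against (a minor variant of) this instance. For \emph{All Greedy}: the per-setting \knapsack optima are the all-$\policy_1$-good set and the all-$\policy_2$-good set, each of which yields objective ratio $0$ under the other setting, so All Greedy returns ratio $0$ while the balanced set yields a strictly positive ratio — this case is immediate and needs no budget-robustness trick beyond noting All Greedy never considers mixed sets. For \emph{Myopic} and \emph{BWS}: I tune the instance so the first marginal-gain-per-cost step already commits to an asymmetric state (e.g., give the asymmetric states slightly higher initial probability per unit cost than the balanced ones), and then argue inductively that every subsequent greedy pick, conditioned on the current partial solution, still prefers an asymmetric state until the (inflated) budget is spent; here Myopic uses the ratio-normalized gain and BWS uses $\min_\policy \ExpReward$, but in the two-symmetric-group design both are fooled the same way because the min is pinned by whichever setting the partial solution has neglected. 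For \emph{DP-RRP}: the key observation is that its max-of-two-entries rule compares tuples only by their current minimum coordinate, so an entry representing "many $\policy_1$-good states, zero $\policy_2$-good states" has minimum $0$ and can be discarded or retained arbitrarily; by arranging costs so that the balanced states are never reachable as an extension of any surviving prefix within budget $L$ (again robust to $\beta L$ by scaling), the backtracked solution is forced onto an asymmetric, ratio-$0$ set while a better balanced solution exists — one can make this precise by choosing costs so the DP's greedy-in-one-dimension tie-breaking locks in the wrong prefix.

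The main obstacle I anticipate is DP-RRP, because unlike the purely greedy heuristics it does explore an exponential-size set of prefixes via the matrix $M$, so a single adversarial first step is not enough; I need the \emph{structure} of the $\min$-based comparison in Equation~\eqref{eq:Mrec} — specifically that it collapses a $|\Policies|$-tuple to a scalar and therefore cannot "remember" that a low-minimum prefix is valuable precisely because it is one asymmetric-state purchase away from becoming balanced — and I must encode this via the cost function so that the balanced states fall just outside every budget slice that the surviving asymmetric prefixes occupy. Getting the costs and initial probabilities to simultaneously (i) make the heuristic's local choice wrong, (ii) keep the true optimum bounded below by a constant, and (iii) survive the fixed multiplicative budget slack $\beta$ is the delicate bookkeeping; once the instance family is pinned down, the ratio computations are routine because every reward is additive and every state is absorbing, so $\ExpReward(\RewardStates|\policy) = \sum_{s\in\RewardStates}\Initial[s]\cdot(\text{indicator that }s\text{ is reachable/relevant under }\policy)\cdot K$ up to the trivial step-count factor. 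I would present the full instance once and reuse it (with cosmetic tweaks) for all four heuristics, then remark that the same construction with $|\Policies|>2$ only strengthens the separation, and relegate the verification details to the Appendix.
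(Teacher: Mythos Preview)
Your construction has a genuine gap for Myopic and BWS. Both of these heuristics score candidate states by a \emph{minimum over settings}: Myopic uses $\min_{\policy}\frac{1}{c[s]}\frac{\ExpReward(\RewardStates\cup\{s\}|\policy)-\ExpReward(\RewardStates|\policy)}{\ExpReward(\RewardStatesPolicy|\policy)}$, and BWS uses the marginal of $H(\RewardStates)=\min_\policy \ExpReward(\RewardStates|\policy)$. In your two-setting instance, an ``asymmetric'' state that is good for $\policy_1$ but worthless for $\policy_2$ has marginal gain exactly $0$ under $\policy_2$, so its Myopic score is $0$; likewise, adding it to any partial solution leaves $H$ unchanged (the min stays pinned at $0$ on the $\policy_2$ coordinate), so its BWS marginal is $0$. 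Your ``balanced'' states, by contrast, have strictly positive marginal under \emph{both} settings. Hence both heuristics would pick balanced states first, not asymmetric ones --- the opposite of what you claim. ``Giving the asymmetric states slightly higher initial probability per unit cost'' does not help: the min is still $0$ regardless of how large the other coordinate is. Your diagnosis that ``the min is pinned by whichever setting the partial solution has neglected'' actually argues \emph{for} the heuristics doing the right thing, not against it.

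The paper sidesteps this by reusing the \hitset reduction of Theorem~\ref{th:inapprox} rather than a two-setting absorbing-state instance. In that construction there is one model $\policy_i$ per set $B_i$, and a reward state $s^r_j$ has positive marginal under $\policy_i$ only if $c_j\in B_i$. Consequently the min-over-$\policy$ marginal of \emph{every} state is $0$ (no item lies in all sets), so Myopic and BWS see nothing but ties and may pick arbitrarily, failing to form a hitting set and returning objective value $0$ even with budget $\beta L$. Your All~Greedy argument is fine as stated, and your intuition for DP-RRP (that the scalar $\min$-comparison in Equation~\eqref{eq:Mrec} discards a prefix that is one purchase away from being balanced) is on the right track, but the concrete two-setting instance you describe does not realize it: with balanced states present and compared by min-coordinate, DP-RRP will also prefer them. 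The fix in all cases is the same: move to an instance where \emph{every} singleton has min-coordinate $0$, which is exactly what the \hitset construction provides.
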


%\paragraph{Extensions.} All algorithms work, without modification, with rewards of arbitrary non-negative values, i.e. $\RewardVector[\cdot]\in \mathbb{R}^{+}$, and when a partial solution is already given.
%-------------------------------------------------------------

%Experimental Analysis
%-------------------------------------------------------------
\section{Experimental Analysis}

In this section we evaluate the running time and performance of algorithms on synthetic and real-world data. We use different problem parameters as Table~\ref{tbl:dataset} shows, marking the default value of each parameter in bold. To satisfy the budget constraint for the $\Psi$-Saturate algorithm, we fix~$\beta =1$ as in Corollary~\ref{cor:apx} and set precision to~$\epsilon = ({|\Policies | \cdot10^{3}})^{-1}$. We set the budget~$L$ as a percentage of the total cost~$\sum_{\state\in\States}c[s]$. To benefit from the additivity property of Lemma~\ref{lem:monot_addit}, we precompute the cumulative reward $\ExpReward(\{\state\}|\policy)$ for each state $\state\in\States$ and model $\policy\in\Policies$. We implemented\footnote{ \url{https://anonymous.4open.science/r/RRP-F6CA}} all methods in C++~17 and ran experiments on a~376GB server with~96 CPUs @2.6GHz.

\begin{table}[!h]
%\vspace{-2mm}
\centering
\begin{tabular}{ |c||c| }
% \hline
%\multicolumn{2}{|c|}{Dataset Statistics} \\
 \hline
 Parameter & Values\\
 \hline
$n$   &   $2500$, $5000$, $7500$, \pmb{$10000$}, $12500$  \\ 
 \hline
  $\langle d\rangle$ &  3,\textbf{6},9,12  \\
\hline
 %\multicolumn{2}{|c|}{\scfr} \\
 %\hline
 $p_{\beta}$ &  0.6, 0.7, \textbf{0.8}, 0.9  \\
 \hline
 $|\Policies|$ &  2,5,\textbf{10},15,20  \\
 \hline
 $K$  & 2,4,\textbf{6},8,10\\
 \hline
 $L$    & 10\%, \textbf{25\%}, 50\%, 75\% \\
 %\hline
 %Alpha &  \textbf{1}
 %,1.25,1.5,1.75,2 \Petros{Remove}  
 %\\
 \hline
 %\multicolumn{2}{|c|}{\er} \\
 %\hline
\end{tabular}
%\vspace{-2.5mm}
\caption{Parameter settings.}\label{tbl:dataset}
\vspace{-2mm}
\end{table}

\subsection{Synthetic Data}

We use two different types of synthetic datasets to represent stochastic networks (i.e., \modelshorts). In each type, we generate a directed graph and then sample edge weights from a normal distribution to create different settings. In more detail:

\paragraph{\er:} We generate~$20$ directed graphs for each of the~$5$ sizes shown in Table~\ref{tbl:dataset}. In all cases, we set the edge creation probability to achieve the desired average in-degree~$\langle d\rangle$. %(default~$6$).

\paragraph{\scfr:} We generate~$20$ directed scale-free graphs for each of the~$5$ sizes shown in Table~\ref{tbl:dataset}. Following~\cite{bollobas2003directed}, we use three parameters to construct the network:~$p_{\alpha}$ ($p_{\gamma}$), the probability to add a new node connected to an existing one chosen randomly by its in-degree (out-degree), and~$p_\beta$, the probability to add an edge~$(u,v)$, with~$u$ and~$v$ selected by their out-degree and in-degree respectively. In all datasets, we tune~$p_{\beta}$ and set~$p_{\gamma} = \frac{1 - p_{\beta}}{3}$ and~$p_{\alpha} = 2p_{\gamma}$, so that $p_{\alpha} + p_{\beta} + p_{\gamma} = 1$.

Given a graph structure, we generate~$|\Policies|=20$ distinct settings, corresponding to different models. For each setting~$\policy_i$, we sample the weight of edge~$(u,v)$ from a normal distribution with mean~$\mu = \sfrac{1}{d_u}$ and standard deviation~$\sigma_i = \sfrac{i}{10d_u}$. When we sample a negative value, we set the edge weight to zero. In each resulting directed graph, we set transition probabilities~$\Transitions$ as normalized edge weights. Moreover, we set the initial probabilities~$\Initial$ proportionally to the sum of nodes' outgoing weights and the cost of a node as the rounded-down average number of its in-neighbors among settings.

\pgfplotsset{width=3.7cm,compat=1.18}
%\vspace{-3mm}
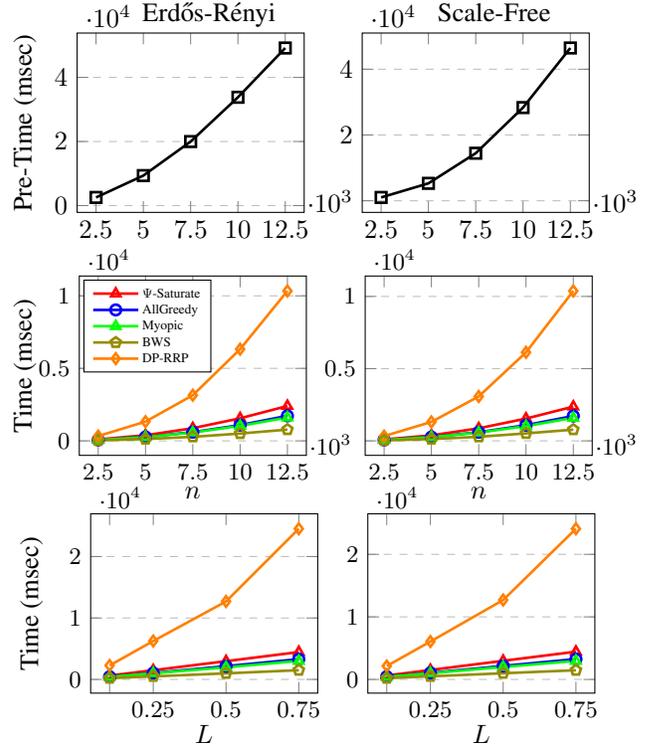
\begin{figure}[!h]
%\hspace*{-30mm}
\begin{subfigure}{.45\linewidth}
\pgfplotsset{width=4.6cm,compat=1.18}
\begin{tikzpicture}
\begin{axis}[
    title={\er},
    title style={yshift=-7 pt,},
    title style={xshift=7 pt,},
    %title style = {font=\small},
    %xlabel = \(N\),
    ylabel = {Pre-Time (msec)},
    %xmode = log,
    %ymode = log,
    %xmin=392, xmax=83772,
    %xtick={392,719,1185,1616,2224,3199,4845,9015,83772},
    legend pos=north west,
    legend style={nodes={scale=0.7, transform shape}},
    ymajorgrids=true,
    grid style=dashed,
    %ticklabel style = {font=\tiny},
    %xlabel style={font=\tiny},
    %ylabel style={font=\tiny},
    ylabel shift = -0 pt,
    %xlabel shift = -5 pt,
    %yticklabel shift = -1 pt,
    %xticklabel shift = -1 pt,
    legend cell align={left},
    xtick={2500, 5000, 7500, 10000, 12500},
    ytick={0,20000,40000},
    yticklabels={$0$,$2$,$4$},
    every x tick scale label/.style={
    at={(1,0)},xshift=1pt,anchor=south west,inner sep=0pt
    },
    scaled x ticks=base 10:-3,
    %log basis y={2},
    scaled y ticks=base 10:-4
]
\addplot[
    color=black,
    mark=square,
    mark options=solid,
    mark size=2pt,
    line width=1pt
    ] table[x={N},  y = {PreTime}]{Data/Random/N/allGreedy.txt}; ]
%\legend{Saturate, AllGreedy, Random}
\end{axis}
\end{tikzpicture}
\end{subfigure}
\hspace*{3mm}
\begin{subfigure}{.45\linewidth}
\pgfplotsset{width=4.6cm,compat=1.18}
\begin{tikzpicture}
\begin{axis}[
    title={\scfr},
    title style={yshift=-5 pt,},
    title style={xshift=7 pt,},
    %title style = {font=\small},
    %xlabel = \(N\),
    %ylabel = {Time (msec)},
    %xmode = log,
    %ymode = log,
    %xmin=392, xmax=83772,
    %xtick={392,719,1185,1616,2224,3199,4845,9015,83772},
    legend pos=north west,
    legend style={nodes={scale=0.7, transform shape}},
    ymajorgrids=true,
    grid style=dashed,
    %ticklabel style = {font=\tiny},
    %xlabel style={font=\tiny},
    %ylabel style={font=\tiny},
    %ylabel shift = -6 pt,
    %xlabel shift = -5 pt,
    %yticklabel shift = -1 pt,
    %xticklabel shift = -1 pt,
    legend cell align={left},
    xtick={2500, 5000, 7500, 10000, 12500},
    ytick={0,20000,40000},
    yticklabels={,$2$,$4$},
    every x tick scale label/.style={
    at={(1,0)},xshift=1pt,anchor=south west,inner sep=0pt
    },
    scaled x ticks=base 10:-3,
    %log basis y={2},
    scaled y ticks=base 10:-4
]
\addplot[
    color=black,
    mark=square,
    mark options=solid,
    mark size=2pt,
    line width=1pt
    ] table[x={N},  y = {PreTime}]{Data/ScaleFree/N/allGreedy.txt}; ]
%\legend{Saturate, AllGreedy, Random}
\end{axis}
\end{tikzpicture}
\end{subfigure}
\\
\begin{subfigure}{.45\linewidth}
\pgfplotsset{width=4.6cm,compat=1.18,every axis legend/.append style={
at={(0.3,0.98)},
anchor=north}}
\begin{tikzpicture}
\begin{axis}[
    title style={yshift=-5 pt,},
    xlabel = \(n\),
    ylabel = {Time (msec)},
    %xmode = log,
    %xmin=392, xmax=83772,
    %xtick={392,719,1185,1616,2224,3199,4845,9015,83772},
    %legend pos=north west,
    legend style={nodes={scale=0.5, transform shape}},
    ymajorgrids=true,
    grid style=dashed,
    ,scaled y ticks={base 10:-2},
    %ticklabel style = {font=\tiny},
    %xlabel style={font=\tiny},
    %ylabel style={font=\tiny},
    ylabel shift = -6 pt,
    %xlabel shift = -5 pt,
    %yticklabel shift = -1 pt,
    %xticklabel shift = -1 pt,
    legend cell align={left},
    ticklabel style = {font=\small},
    %xlabel style={font=\tiny},
    %ylabel style={font=\tiny},
    %ylabel shift = -10 pt,
    xlabel shift = -4 pt,
    xtick={2500, 5000, 7500, 10000, 12500},
    every x tick scale label/.style={
    at={(1,0)},xshift=1pt,anchor=south west,inner sep=0pt
    },
    scaled x ticks=base 10:-3,
    scaled y ticks=base 10:-4
]
\addplot[
    color=red,
    mark=triangle,
    mark options=solid,
    mark size=2pt,
    line width=1pt
    ] table[x={N},  y = {Time}]{Data/Random/N/saturate.txt}; ]
%\addplot[
%    color=brown,
%    mark=x,
%    mark options=solid,
%    mark size=2pt,
%    line width=1pt
%    ] table[x={N},  y = {Time}]{Data/ScaleFree/N/singleGreedy.txt}; ]
\addplot[
    color=blue,
    mark=o,
    mark options=solid,
    mark size=2pt,
    line width=1pt
    ] table[x={N},  y = {Time}]{Data/Random/N/allGreedy.txt}; ]
\addplot[
    color=green,
    mark=triangle,
    mark options=solid,
    mark size=2pt,
    line width=1pt
    ] table[x={N},  y = {Time}]{Data/Random/N/myopic.txt}; ]
\addplot[
    color=olive,
    mark=pentagon,
    mark options=solid,
    mark size=2pt,
    line width=1pt
    ] table[x={N},  y = {Time}]{Data/Random/N/bws.txt}; ]
\addplot[
    color=orange,
    mark=diamond,
    mark options=solid,
    mark size=2pt,
    line width=1pt
    ] table[x={N},  y = {Time}]{Data/Random/N/dynamic.txt}; ]
\legend{$\Psi$-Saturate, AllGreedy, Myopic, BWS, DP-RRP}
\end{axis}
\end{tikzpicture}
\end{subfigure}
\hspace*{1mm}
\begin{subfigure}{.45\linewidth}
\pgfplotsset{width=4.6cm,compat=1.18}
\begin{tikzpicture}
\begin{axis}[
    title style={yshift=-5 pt,},
    xlabel = \(n\),
    %ylabel = {Time (msec)},
    %xmode = log,
    %xmin=392, xmax=83772,
    %xtick={392,719,1185,1616,2224,3199,4845,9015,83772},
    legend pos=north west,
    legend style={nodes={scale=0.5, transform shape}},
    ymajorgrids=true,
    grid style=dashed,
    ,scaled y ticks={base 10:-2},
    %ticklabel style = {font=\tiny},
    %xlabel style={font=\tiny},
    %ylabel style={font=\tiny},
    %ylabel shift = -6 pt,
    %xlabel shift = -5 pt,
    %yticklabel shift = -1 pt,
    %xticklabel shift = -1 pt,
    legend cell align={left},
    xtick={2500, 5000, 7500, 10000, 12500},
    ytick={0,5000,10000},
    yticklabels={,$0.5$,$1$},
    ticklabel style = {font=\small},
    %xlabel style={font=\tiny},
    %ylabel style={font=\tiny},
    %ylabel shift = -10 pt,
    xlabel shift = -4 pt,
    every x tick scale label/.style={
    at={(1,0)},xshift=1pt,anchor=south west,inner sep=0pt
    },
    scaled x ticks=base 10:-3,
    scaled y ticks=base 10:-4
]
\addplot[
    color=red,
    mark=triangle,
    mark options=solid,
    mark size=2pt,
    line width=1pt
    ] table[x={N},  y = {Time}]{Data/ScaleFree/N/saturate.txt}; ]
%\addplot[
%    color=brown,
%    mark=x,
%    mark options=solid,
%    mark size=2pt,
%    line width=1pt
%    ] table[x={N},  y = {Time}]{Data/ScaleFree/N/singleGreedy.txt}; ]
\addplot[
    color=blue,
    mark=o,
    mark options=solid,
    mark size=2pt,
    line width=1pt
    ] table[x={N},  y = {Time}]{Data/ScaleFree/N/allGreedy.txt}; ]
\addplot[
    color=green,
    mark=triangle,
    mark options=solid,
    mark size=2pt,
    line width=1pt
    ] table[x={N},  y = {Time}]{Data/ScaleFree/N/myopic.txt}; ]
\addplot[
    color=orange,
    mark=diamond,
    mark options=solid,
    mark size=2pt,
    line width=1pt
    ] table[x={N},  y = {Time}]{Data/ScaleFree/N/dynamic.txt}; ]
\addplot[
    color=olive,
    mark=pentagon,
    mark options=solid,
    mark size=2pt,
    line width=1pt
    ] table[x={N},  y = {Time}]{Data/ScaleFree/N/bws.txt}; ]
%\legend{$\Psi$-Saturate, AllGreedy, Myopic, Dynamic, BWS}
%\legend{$\Psi$-Saturate, AllGreedy, Myopic, Dynamic, BWS}
\end{axis}
\end{tikzpicture}
\end{subfigure}
\vspace*{-3mm}
\\
\hspace*{0mm}
\begin{subfigure}{.45\linewidth}
\pgfplotsset{width=4.6cm,compat=1.18}
\begin{tikzpicture}
\begin{axis}[
    title style={yshift=-5 pt,},
    xlabel = \(L\),
    ylabel = {Time (msec)},
    %xmode = log,
    %xmin=392, xmax=83772,
    %xtick={392,719,1185,1616,2224,3199,4845,9015,83772},
    legend pos=north west,
    legend style={nodes={scale=0.5, transform shape}},
    ymajorgrids=true,
    grid style=dashed,
    ,scaled y ticks={base 10:-2},
    %ticklabel style = {font=\tiny},
    %xlabel style={font=\tiny},
    %ylabel style={font=\tiny},
    ylabel shift = 3 pt,
    %xlabel shift = -5 pt,
    %yticklabel shift = -1 pt,
    %xticklabel shift = -1 pt,
    legend cell align={left},
    xtick={0.1, 0.25 , 0.5, 0.75},
    xticklabels={,$0.25$,$0.5$,$0.75$},
    ticklabel style = {font=\small},
    %xlabel style={font=\tiny},
    %ylabel style={font=\tiny},
    %ylabel shift = -10 pt,
    xlabel shift = -4 pt,
    every x tick scale label/.style={
    at={(1,0)},xshift=1pt,anchor=south west,inner sep=0pt
    },
    %scaled x ticks=base 10:-3,
    scaled y ticks=base 10:-4
]
\addplot[
    color=red,
    mark=triangle,
    mark options=solid,
    mark size=2pt,
    line width=1pt
    ] table[x={budget},  y = {Time}]{Data/Random/Budget/saturate.txt}; ]
%\addplot[
%    color=brown,
%    mark=x,
%    mark options=solid,
%    mark size=2pt,
%    line width=1pt
%    ] table[x={N},  y = {Time}]{Data/ScaleFree/N/singleGreedy.txt}; ]
\addplot[
    color=blue,
    mark=o,
    mark options=solid,
    mark size=2pt,
    line width=1pt
    ] table[x={budget},  y = {Time}]{Data/Random/Budget/allGreedy.txt}; ]
\addplot[
    color=green,
    mark=triangle,
    mark options=solid,
    mark size=2pt,
    line width=1pt
    ] table[x={budget},  y = {Time}]{Data/Random/Budget/myopic.txt}; ]
\addplot[
    color=orange,
    mark=diamond,
    mark options=solid,
    mark size=2pt,
    line width=1pt
    ] table[x={budget},  y = {Time}]{Data/Random/Budget/dynamic.txt}; ]
\addplot[
    color=olive,
    mark=pentagon,
    mark options=solid,
    mark size=2pt,
    line width=1pt
    ] table[x={budget},  y = {Time}]{Data/Random/Budget/bws.txt}; ]
%\legend{Saturate, AllGreedy, Myopic, Dynamic, BWS}
\end{axis}
\end{tikzpicture}
\end{subfigure}
\hspace*{3mm}
\begin{subfigure}{.45\linewidth}
\pgfplotsset{width=4.6cm,compat=1.18}
\begin{tikzpicture}
\begin{axis}[
    title style={yshift=-5 pt,},
    xlabel = \(L\),
    %ylabel = {Time (msec)},
    %xmode = log,
    %xmin=392, xmax=83772,
    %xtick={392,719,1185,1616,2224,3199,4845,9015,83772},
    legend pos=north west,
    legend style={nodes={scale=0.5, transform shape}},
    ymajorgrids=true,
    grid style=dashed,
    ,scaled y ticks={base 10:-2},
    ticklabel style = {font=\small},
    %xlabel style={font=\tiny},
    %ylabel style={font=\tiny},
    %ylabel shift = -10 pt,
    xlabel shift = -4 pt,
    ticklabel style = {font=\small},
    %xlabel style={font=\tiny},
    %ylabel style={font=\tiny},
    %ylabel shift = -10 pt,
    xlabel shift = -4 pt,
    %ticklabel style = {font=\tiny},
    %xlabel style={font=\tiny},
    %ylabel style={font=\tiny},
    %ylabel shift = -6 pt,
    %xlabel shift = -5 pt,
    %yticklabel shift = -1 pt,
    %xticklabel shift = -1 pt,
    legend cell align={left},
    xtick={0.1, 0.25 , 0.5, 0.75},
    xticklabels={,$0.25$,$0.5$,$0.75$},
    every x tick scale label/.style={
    at={(1,0)},xshift=1pt,anchor=south west,inner sep=0pt
    },
    %scaled x ticks=base 10:-3,
    scaled y ticks=base 10:-4
]
\addplot[
    color=red,
    mark=triangle,
    mark options=solid,
    mark size=2pt,
    line width=1pt
    ] table[x={budget},  y = {Time}]{Data/ScaleFree/Budget/saturate.txt}; ]
%\addplot[
%    color=brown,
%    mark=x,
%    mark options=solid,
%    mark size=2pt,
%    line width=1pt
%    ] table[x={N},  y = {Time}]{Data/ScaleFree/N/singleGreedy.txt}; ]
\addplot[
    color=blue,
    mark=o,
    mark options=solid,
    mark size=2pt,
    line width=1pt
    ] table[x={budget},  y = {Time}]{Data/ScaleFree/Budget/allGreedy.txt}; ]
\addplot[
    color=green,
    mark=triangle,
    mark options=solid,
    mark size=2pt,
    line width=1pt
    ] table[x={budget},  y = {Time}]{Data/ScaleFree/Budget/myopic.txt}; ]
\addplot[
    color=orange,
    mark=diamond,
    mark options=solid,
    mark size=2pt,
    line width=1pt
    ] table[x={budget},  y = {Time}]{Data/ScaleFree/Budget/dynamic.txt}; ]
\addplot[
    color=olive,
    mark=pentagon,
    mark options=solid,
    mark size=2pt,
    line width=1pt
    ] table[x={budget},  y = {Time}]{Data/ScaleFree/Budget/bws.txt}; ]
%\legend{Saturate, AllGreedy, Myopic, Dynamic, BWS}
\end{axis}
\end{tikzpicture}
\end{subfigure}
%\vspace{-3mm}
\caption{Preprocessing and running time vs.~$n$, $L$ for \er (left) and \scfr (right) datasets.\label{fig:time}}
%\vspace{-4mm}
\end{figure}
\input{Plots/generated/performance}

\paragraph{Time efficiency.} Figure~\ref{fig:time} plots the average (over~$20$ graphs) preprocessing time vs. graph size and running time for all algorithms vs. graph size and budget. 
Notably, the precomputation takes time superlinear in graph size~$n$, as the time complexity needed for the power iteration is~$\mathcal{O}(K(n+m))$ for~$K$ steps maximum and~$m$ edges. 
%In addition, the 
The runtime of most algorithms grows linearly with graph size and budget, indicating their efficiency, %with the exception of 
except of 
DP-RRP, whose time complexity is at least quadratic in~$n$, $\Theta(|\Policies|Ln)$ while~$L = \Omega(n)$.

\paragraph{Reward placement robustness.} Figure~\ref{fig:performance} illustrates the algorithms' average performance (over~$20$ graphs) in the reward placement robustness objective as several parameters vary. On the \er data, $\Psi$-Saturate outperforms all heuristics vs. graph size~$n$ and in other measurements, while on \scfr data, DP-RRP performs best overall. With a single setting, i.e., when~$|\Policies|=1$, all heuristics find an almost optimal solution. However, as expected, performance decreases as the number of models~$|\Policies|$ grows, whence the adversary possesses a larger pool of models to select from. Similarly, as the number of steps~$K$ grows, the feasible agent movements expand, causing the optimal cumulative reward per setting to rise more than the worst-case reward in general, hence the robustness score falls. In contrast, the score of all algorithms rises with budget~$L$. Intuitively, a higher budget offers more flexibility to hedge against worst-case outcomes, hence better robustness scores. This effect is more evident on the \scfr dataset, which has fewer lucrative nodes of high in-degree. The growth of robustness scores vs.~$\langle d\rangle$ on \er data confirms the importance of in-degree. On the other hand, the growth of~$p_{\beta}$ results in \scfr networks with more skewed power-law in-degree distributions, on which robustness scores suffer.

\subsection{Real Data}

To further validate our observations, we create graphs using real-world movement data. We gathered movement records from Baidu Map, covering Xuanwu District in Nanjing\footnote{https://en.wikipedia.org/wiki/Nanjing} from July 2019 to September 2019; these records comprise sequential Points of Interest (POIs) with timestamps, allowing us to calculate the probability of transitioning between POIs based on the Markovian assumption. Using these probabilities, we construct graphs where nodes represent POIs and edges express transition probabilities. Each graph captures a~7-day period, resulting in a total of~13 graphs. The combined dataset features a total of~51\,943 different nodes. Out of practical considerations, we assign data-driven costs to POIs based on their visit frequency and a fixed value: $c[x] = \lfloor\mathsf{frequency}(x)/25 + 50\rfloor$. The initial and steps probabilities follow the same default setup as the synthetic datasets.

\paragraph{Time and Performance.} Preprecessing the Xuanwu dataset where~$n = 51\,943$ and $|\Policies| = 13$ takes~118 seconds. Figure~\ref{fig:Xuanwu} shows how the running time and robustness scores vary as the budget grows. DP-RRP is the most time-consuming, followed by $\Psi$-Saturate, while BWS emerges as the most time-efficient solution. Interestingly, the robustness score does not follow a clear upward trend vs. budget with these real-world data; after all, the objective of Equation~\eqref{eq:RRP} is not a monotonic function of budget; while a higher budget allows for more flexibility in allocating resources, it also allows the optimal reward to grow correspondingly. Nevertheless, DP-RRP consistently outperforms all algorithms across budget values, corroborating its capacity to uncover high quality solutions even in hard scenarios, even while the performance of $\Psi$-Saturate and other heuristics fluctuates.

\pgfplotsset{width=5cm,compat=1.18}
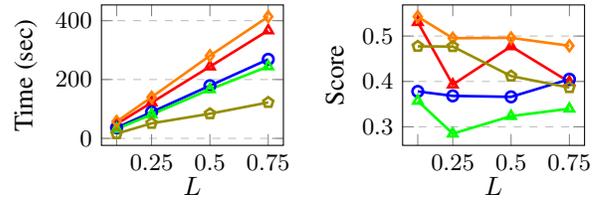
\begin{figure}[!t]
\centering
\begin{subfigure}{.45\linewidth}
\pgfplotsset{width=4.0cm,compat=1.18}
\begin{tikzpicture}
\begin{axis}[
    title style={yshift=-5 pt,},
    xlabel = \(L\),
    ylabel = {Time (sec)},
    %xmode = log,
    %xmin=392, xmax=83772,
    %xtick={392,719,1185,1616,2224,3199,4845,9015,83772},
    legend pos=north west,
    legend style={nodes={scale=0.4, transform shape}},
    ymajorgrids=true,
    grid style=dashed,
    ticklabel style = {font=\small},
    %xlabel style={font=\tiny},
    %ylabel style={font=\tiny},
    %ylabel shift = -10 pt,
    xlabel shift = -4 pt,
    %,scaled y ticks={base 10:-2},
    %ticklabel style = {font=\tiny},
    %xlabel style={font=\tiny},
    %ylabel style={font=\tiny},
    %ylabel shift = -6 pt,
    %xlabel shift = -5 pt,
    %yticklabel shift = -1 pt,
    %xticklabel shift = -1 pt,
    legend cell align={left},
    every x tick scale label/.style={
    at={(1,0)},xshift=1pt,anchor=south west,inner sep=0pt
    },
    xtick={0.1, 0.25 , 0.5, 0.75},
    xticklabels={,$0.25$,$0.5$,$0.75$},
    %scaled x ticks=base 10:-3,
    %scaled y ticks=base 10:-5
]
\addplot[
    color=red,
    mark=triangle,
    mark options=solid,
    mark size=2pt,
    line width=1pt
    ] table[x={budget},  y expr=\thisrow{Time}*0.001]{Data/xw/Budget/saturate.txt}; ]
%\addplot[
%    color=brown,
%    mark=x,
%    mark options=solid,
%    mark size=2pt,
%    line width=1pt
%    ] table[x={N},  y = {Time}]{Data/ScaleFree/N/singleGreedy.txt}; ]
\addplot[
    color=blue,
    mark=o,
    mark options=solid,
    mark size=2pt,
    line width=1pt
    ] table[x={budget},  y expr=\thisrow{Time}*0.001]{Data/xw/Budget/allGreedy.txt}; ]
\addplot[
    color=green,
    mark=triangle,
    mark options=solid,
    mark size=2pt,
    line width=1pt
    ] table[x={budget},  y expr=\thisrow{Time}*0.001]{Data/xw/Budget/myopic.txt}; ]
\addplot[
    color=orange,
    mark=diamond,
    mark options=solid,
    mark size=2pt,
    line width=1pt
    ] table[x={budget}, y expr=\thisrow{Time}*0.001]{Data/xw/Budget/dynamic.txt}; ]
\addplot[
    color=olive,
    mark=pentagon,
    mark options=solid,
    mark size=2pt,
    line width=1pt
    ] table[x={budget},  y expr=\thisrow{Time}*0.001]{Data/xw/Budget/bws.txt}; ]
%\legend{$\Psi$-Saturate, AllGreedy, Myopic, Dynamic, BWS}
\end{axis}
\end{tikzpicture}
\end{subfigure}
\hspace*{1mm}
\begin{subfigure}{.45\linewidth}
\pgfplotsset{width=4.0cm,compat=1.18}
\begin{tikzpicture}
\begin{axis}[
    title style={yshift=-5 pt,},
    xlabel = \(L\),
    ylabel = {Score},
    %xmode = log,
    %xmin=392, xmax=83772,
    %xtick={392,719,1185,1616,2224,3199,4845,9015,83772},
    legend pos=north west,
    legend style={nodes={scale=0.7, transform shape}},
    ymajorgrids=true,
    grid style=dashed,
    ticklabel style = {font=\small},
    %xlabel style={font=\tiny},
    %ylabel style={font=\tiny},
    %ylabel shift = -10 pt,
    xlabel shift = -4 pt,
    %,scaled y ticks={base 10:-2},
    %ticklabel style = {font=\tiny},
    %xlabel style={font=\tiny},
    %ylabel style={font=\tiny},
    %ylabel shift = -6 pt,
    %xlabel shift = -5 pt,
    %yticklabel shift = -1 pt,
    %xticklabel shift = -1 pt,
    legend cell align={left},
    xtick={0.1, 0.25 , 0.5, 0.75},
    xticklabels={,$0.25$,$0.5$,$0.75$},
    every x tick scale label/.style={
    at={(1,0)},xshift=1pt,anchor=south west,inner sep=0pt
    },
    %scaled y ticks=base 10:1
    %scaled x ticks=base 10:-3,
    %scaled y ticks=base 10:-3
]
\addplot[
    color=red,
    mark=triangle,
    mark options=solid,
    mark size=2pt,
    line width=1pt
    ] table[x={budget},  y = {Score}]{Data/xw/Budget/saturate.txt}; ]
%\addplot[
%    color=brown,
%    mark=x,
%    mark options=solid,
%    mark size=2pt,
%    line width=1pt
%    ] table[x={N},  y = {Time}]{Data/ScaleFree/N/singleGreedy.txt}; ]
\addplot[
    color=blue,
    mark=o,
    mark options=solid,
    mark size=2pt,
    line width=1pt
    ] table[x={budget},  y = {Score}]{Data/xw/Budget/allGreedy.txt}; ]
\addplot[
    color=green,
    mark=triangle,
    mark options=solid,
    mark size=2pt,
    line width=1pt
    ] table[x={budget},  y = {Score}]{Data/xw/Budget/myopic.txt}; ]
\addplot[
    color=orange,
    mark=diamond,
    mark options=solid,
    mark size=2pt,
    line width=1pt
    ] table[x={budget},  y = {Score}]{Data/xw/Budget/dynamic.txt}; ]
\addplot[
    color=olive,
    mark=pentagon,
    mark options=solid,
    mark size=2pt,
    line width=1pt
    ] table[x={budget},  y = {Score}]{Data/xw/Budget/bws.txt}; ]
%\legend{Saturate, AllGreedy, Myopic, Dynamic, BWS}
\end{axis}
\end{tikzpicture}
\end{subfigure}
\caption{Time and robustness score vs.~$L$ on the Xuanwu dataset.}\label{fig:Xuanwu}
\end{figure}
%-------------------------------------------------------------

%Conclusions
%-------------------------------------------------------------
\section{Conclusions}

We introduced the \NPhard problem of \problem (RRP). Assuming an agent is moving on an unknown Markov Mobility Model (MMM), sampled by a set of $\Policies$ candidates, RRP calls to select reward states within a budget that maximize the worst-case ratio of the expected reward (agent visits) over the optimal one. Having shown that RRP is strongly inapproximable, we propose $\Psi$-Saturate, an algorithm that achieves an~$\epsilon$-additive approximation by exceeding the budget constraint by a factor of~$\mathcal{O}(\ln |\Pi|/\epsilon)$. We also developed heuristics, most saliently one based on a dynamic programming algorithm. Our experimental analysis on both synthetic and real-world data indicates the effectiveness of $\Psi$-Saturate and the dynamic-programming-based solution. In the future, we aim to examine the robust configuration of agent-based content features~\cite{ivanov17} under a set of adversarial mobility models.
%\clearpage
%-------------------------------------------------------------

%Acknowledgments
%-------------------------------------------------------------
\section*{Acknowledgments}
Work supported by grants from DFF (P.P. and P.K., 9041-00382B) and Villum Fonden (A.P., VIL42117).
%-------------------------------------------------------------

%References
%-------------------------------------------------------------
\bibliographystyle{named}
\bibliography{ref}

\begin{thebibliography}{}

\bibitem[\protect\citeauthoryear{Adiga \bgroup \em et al.\egroup }{2014}]{adiga2014sensitivity}
Abhijin Adiga, Chris~J. Kuhlman, Henning~S. Mortveit, and Anil Kumar~S. Vullikanti.
\newblock Sensitivity of diffusion dynamics to network uncertainty.
\newblock {\em J. Artif. Intell. Res.}, 51:207--226, 2014.

\bibitem[\protect\citeauthoryear{Aissi \bgroup \em et al.\egroup }{2009}]{aissi2009min}
Hassene Aissi, Cristina Bazgan, and Daniel Vanderpooten.
\newblock Min--max and min--max regret versions of combinatorial optimization problems: {A} survey.
\newblock {\em Eur. J. Oper. Res.}, 197(2):427--438, 2009.

\bibitem[\protect\citeauthoryear{Amelkin and Singh}{2019}]{amelkin2019fighting}
Victor Amelkin and Ambuj~K. Singh.
\newblock Fighting opinion control in social networks via link recommendation.
\newblock In {\em {ACM} {SIGKDD}}, pages 677--685, 2019.

\bibitem[\protect\citeauthoryear{Bollob{\'{a}}s \bgroup \em et al.\egroup }{2003}]{bollobas2003directed}
B{\'{e}}la Bollob{\'{a}}s, Christian Borgs, Jennifer~T. Chayes, and Oliver Riordan.
\newblock Directed scale-free graphs.
\newblock In {\em {ACM-SIAM} SODA}, pages 132--139, 2003.

\bibitem[\protect\citeauthoryear{Brin and Page}{1998}]{brin1998anatomy}
Sergey Brin and Lawrence Page.
\newblock The anatomy of a large-scale hypertextual web search engine.
\newblock {\em Computer networks and ISDN systems}, 30(1-7):107--117, 1998.

\bibitem[\protect\citeauthoryear{Chen \bgroup \em et al.\egroup }{2016}]{chen2016robust}
Wei Chen, Tian Lin, Zihan Tan, Mingfei Zhao, and Xuren Zhou.
\newblock Robust influence maximization.
\newblock In {\em {ACM} {SIGKDD}}, pages 795--804, 2016.

\bibitem[\protect\citeauthoryear{Dinur and Steurer}{2014}]{dinur2014analytical}
Irit Dinur and David Steurer.
\newblock Analytical approach to parallel repetition.
\newblock In {\em {ACM} STOC}, pages 624--633, 2014.

\bibitem[\protect\citeauthoryear{Du \bgroup \em et al.\egroup }{2013}]{du2013scalable}
Nan Du, Le~Song, Manuel Gomez{-}Rodriguez, and Hongyuan Zha.
\newblock Scalable influence estimation in continuous-time diffusion networks.
\newblock In {\em NeurIPS}, pages 3147--3155, 2013.

\bibitem[\protect\citeauthoryear{Garey and Johnson}{1979}]{johnson1979computers}
Michael~R. Garey and David~S. Johnson.
\newblock {\em Computers and Intractability: {A} Guide to the Theory of NP-Completeness}.
\newblock W. H. Freeman, 1979.

\bibitem[\protect\citeauthoryear{He and Kempe}{2016}]{rimkempe}
Xinran He and David Kempe.
\newblock Robust influence maximization.
\newblock In {\em {ACM} {SIGKDD}}, pages 885--894, 2016.

\bibitem[\protect\citeauthoryear{Hor{\v{c}}{\'\i}k \bgroup \em et al.\egroup }{2022}]{horvcik2022optimal}
Rostislav Hor{\v{c}}{\'\i}k, Alvaro Torralba, Pavel Ryt{\'\i}{\v{r}}, Luk{\'a}{\v{s}} Chrpa, and Stefan Edelkamp.
\newblock Optimal mixed strategies for cost-adversarial planning games.
\newblock In {\em Intl Conf. on Autom. Plan. and Sched.}, pages 160--168, 2022.

\bibitem[\protect\citeauthoryear{Ivanov \bgroup \em et al.\egroup }{2017}]{ivanov17}
Sergei Ivanov, Konstantinos Theocharidis, Manolis Terrovitis, and Panagiotis Karras.
\newblock Content recommendation for viral social influence.
\newblock In {\em {ACM} {SIGIR}}, pages 565--574, 2017.

\bibitem[\protect\citeauthoryear{Johnson \bgroup \em et al.\egroup }{2000}]{johnson2000prize}
David~S. Johnson, Maria Minkoff, and Steven Phillips.
\newblock The prize collecting {Steiner} tree problem: theory and practice.
\newblock In {\em {ACM-SIAM} SODA}, pages 760--769, 2000.

\bibitem[\protect\citeauthoryear{Karp}{1972}]{karp1972reducibility}
Richard~M. Karp.
\newblock Reducibility among combinatorial problems.
\newblock In {\em Complexity of Computer Computations}, pages 85--103, 1972.

\bibitem[\protect\citeauthoryear{Kempe \bgroup \em et al.\egroup }{2003}]{kempe2003maximizing}
David Kempe, Jon~M. Kleinberg, and {\'{E}}va Tardos.
\newblock Maximizing the spread of influence through a social network.
\newblock In {\em {ACM} {SIGKDD}}, pages 137--146, 2003.

\bibitem[\protect\citeauthoryear{Kouvelis and Yu}{1997}]{kouvelis2013robust}
Panos Kouvelis and Gang Yu.
\newblock {\em Robust Discrete Optimization and Its Applications}, volume~14 of {\em Nonconvex Optimization and Its Applications}.
\newblock Springer, 1997.

\bibitem[\protect\citeauthoryear{Krause and Golovin}{2014}]{submodular_survey12}
Andreas Krause and Daniel Golovin.
\newblock Submodular function maximization.
\newblock In {\em Tractability: Practical Approaches to Hard Problems}, pages 71--104. Cambridge University Press, 2014.

\bibitem[\protect\citeauthoryear{Krause \bgroup \em et al.\egroup }{2008}]{krause2008robust}
Andreas Krause, H.~Brendan McMahan, Carlos Guestrin, and Anupam Gupta.
\newblock Robust submodular observation selection.
\newblock {\em Journ. of Mach. Learn. Res.}, 9(93):2761--2801, 2008.

\bibitem[\protect\citeauthoryear{Li and Yang}{2020}]{li2020optimal}
Huiran Li and Yanwu Yang.
\newblock Optimal keywords grouping in sponsored search advertising under uncertain environments.
\newblock {\em International Journal of Electronic Commerce}, 24(1):107--129, 2020.

\bibitem[\protect\citeauthoryear{Logins \bgroup \em et al.\egroup }{2020}]{logins20}
Alvis Logins, Yuchen Li, and Panagiotis Karras.
\newblock On the robustness of cascade diffusion under node attacks.
\newblock In {\em The Web Conference}, pages 2711--2717, 2020.

\bibitem[\protect\citeauthoryear{Logins \bgroup \em et al.\egroup }{2022}]{logins22}
Alvis Logins, Yuchen Li, and Panagiotis Karras.
\newblock On the robustness of diffusion in a network under node attacks.
\newblock {\em {IEEE} Trans. Knowl. Data Eng.}, 34(12):5884--5895, 2022.

\bibitem[\protect\citeauthoryear{Martello and Toth}{1987}]{martello1987algorithms}
Silvano Martello and Paolo Toth.
\newblock Algorithms for knapsack problems.
\newblock In {\em Surveys in Combinatorial Optimization}, volume 132 of {\em North-Holland Mathematics Studies}, pages 213--257. 1987.

\bibitem[\protect\citeauthoryear{Minoux}{1978}]{minoux1978}
Michel Minoux.
\newblock Accelerated greedy algorithms for maximizing submodular set functions.
\newblock In {\em IFIP Conference on Optimization Techniques}, pages 234--243, 1978.

\bibitem[\protect\citeauthoryear{Ordentlich and Cover}{1998}]{ordentlich1998cost}
Erik Ordentlich and Thomas~M. Cover.
\newblock The cost of achieving the best portfolio in hindsight.
\newblock {\em Math. Oper. Res.}, 23(4):960--982, 1998.

\bibitem[\protect\citeauthoryear{Ostachowicz \bgroup \em et al.\egroup }{2019}]{ostachowicz2019optimization}
Wieslaw Ostachowicz, Rohan Soman, and Pawel Malinowski.
\newblock Optimization of sensor placement for structural health monitoring: a review.
\newblock {\em Structural Health Monitoring}, 18(3):963--988, 2019.

\bibitem[\protect\citeauthoryear{Rosenfeld and Globerson}{2016}]{rosenfeld2016optimal}
Nir Rosenfeld and Amir Globerson.
\newblock Optimal tagging with {Markov} chain optimization.
\newblock In {\em NeurIPS}, pages 1307--1315, 2016.

\bibitem[\protect\citeauthoryear{von Mises and Pollaczek-Geiringer}{1929}]{mises1929praktische}
Richard von Mises and Hilda Pollaczek-Geiringer.
\newblock Praktische verfahren der gleichungsaufl{\"o}sung.
\newblock {\em ZAMM - Zeitschrift f{\"u}r Angewandte Mathematik und Mechanik}, 9(1):58--77, 1929.

\bibitem[\protect\citeauthoryear{Wolsey}{1982}]{wolsey1982analysis}
Laurence~A. Wolsey.
\newblock An analysis of the greedy algorithm for the submodular set covering problem.
\newblock {\em Combinatorica}, 2(4):385--393, 1982.

\bibitem[\protect\citeauthoryear{Yu}{1996}]{yu1996max}
Gang Yu.
\newblock On the max--min 0--1 knapsack problem with robust optimization applications.
\newblock {\em Operations Research}, 44(2):407--415, 1996.

\bibitem[\protect\citeauthoryear{Zhang and Vorobeychik}{2016}]{zhang2016submodular}
Haifeng Zhang and Yevgeniy Vorobeychik.
\newblock Submodular optimization with routing constraints.
\newblock In {\em {AAAI} Conf. on Artif. Intell.}, pages 819--826, 2016.

\bibitem[\protect\citeauthoryear{Zhang \bgroup \em et al.\egroup }{2020}]{zhang2020geodemographic}
Kaichen Zhang, Jingbo Zhou, Donglai Tao, Panagiotis Karras, Qing Li, and Hui Xiong.
\newblock Geodemographic influence maximization.
\newblock In {\em ACM SIGKDD}, pages 2764--2774, 2020.

\end{thebibliography}
\clearpage
%-------------------------------------------------------------

%Appendix

%Appendix
%-------------------------------------------------------------
\appendix
\section{Appendix}\label{sec:app}

\lemrpminmax*
\begin{proof}
To make the reduction, we first calculate the optimal reward placement for each model $\policy\in\Policies$, separately. 
This can be done in $\mathcal{O}(|\Policies|Ln)$, as for a given model $\policy$, the optimal reward placement $\RewardStatesPolicy = \arg_{\RewardStates} \max \ExpReward(\RewardStates|\policy)$ within budget~$L$ can be found in~$\mathcal{O}(Ln)$ due to Lemma~\ref{lem:rp_optimal_linear}. 
Then, for each state $\state \in \States$ we set a unique item $u \in U$ and each model $\policy_i \in \Policies$ corresponds to unique scenario $x_i\in X$. Each item has cost~$c(u) = c[\state]$ with value $F_{x_i}(u)=\frac{\ExpReward(\{\state\}|\policy_i)}{\ExpReward(\pki{\RewardStatesPolicyi}|\policy_i)}$. 
Notably, the optimal solution for \mnk is also optimal for \problemshort, as by using equation $\sum_{u\in V}F_{x_i}(u) = \sum_{\state \in \RewardStates}{\ExpReward( \{\state\}|\policy_i)} = \ExpReward(\pki{\RewardStates}|\policy_i)$ we have:
$\max\limits_V\min\limits_{x_i\in X}\sum\limits_{u\in V}F(u) \,\,=\max\limits_{\RewardStates} \min \limits_{\policy_i \in \Policies} \frac{\ExpReward(\RewardStates|\policy_i )}{\ExpReward(\RewardStatesPolicyi|\policy_i)}$
\end{proof}

\lemheur*
\begin{proof}
Lemma~\ref{lem:bicriteria_apx} states that for any constant $\epsilon>0$ with $\beta = 1 + \ln\frac{3|\Policies|}{\epsilon}$, $\Psi$-Saturate  achieves an $\epsilon$-additive approximation of the optimal solution by exceeding the budget constraint from $L$ to $\beta L$.
In contrast, heuristics cannot guarantee any bicriteria approximation under the same budget increase.
Consider the \problemshort~reduction of a \hitset instance as in Theorem~\ref{th:inapprox}. Under this scenario, the min ratio objective in Equation~\eqref{eq:RRP} is larger than zero if and only if the reward placement forms a \Petros{hitting set.} 
The All Greedy solution contains elements from a single set; thus, even by exceeding the budget constraint, the objective might equal zero, as it does not consider forming a hitting set.
Similarly, BWS and Myopic iteratively select the state that offers maximal marginal gain. In the worst case, both may be ineffective, as the marginal gain of each state can be zero even after exceeding the budget unless it forms a hitting set. 
Finally, even by exceeding the budget constraint to~$\beta L$, with~$\beta = 1+\ln\frac{3|\Policies|}{\epsilon}$, DP-RRP may perform arbitrarily bad. In the worst case, entry $M[n, \lfloor\beta L\rfloor]$ may contain a tuple with at least one zero, resulting in an objective value of zero.
This is because Equation~\eqref{eq:Mrec} iteratively selects the maximum of two tuples without guaranteeing optimality.
\end{proof}
%-------------------------------------------------------------
\end{document}